\title{Online TSP with Known Locations} 
\author{Evripidis Bampis}{Sorbonne Universit\'e, CNRS, LIP6, F-75005 Paris, France}{evripidis.bampis@lip6.fr}{}{}
\author{Bruno Escoffier}{Sorbonne Universit\'e, CNRS, LIP6, F-75005 Paris, France \and Institut Universitaire de France, Paris, France}{bruno.escoffier@lip6.fr}{}{
}
\author{Niklas Hahn}{Sorbonne Universit\'e, CNRS, LIP6, F-75005 Paris, France}{niklas.hahn@lip6.fr}{}{}
\author{Michalis Xefteris}{Sorbonne Universit\'e, CNRS, LIP6, F-75005 Paris, France}{michail.xefteris@lip6.fr}{}{
}
\authorrunning{E. Bampis, B. Escoffier, N. Hahn and M. Xefteris} 
\keywords{TSP, online algorithms, competitive analysis} 
\newcommand{\opt}{\ensuremath{OPT}}
\newcommand{\alg}{\ensuremath{ALG}}
\newcommand{\ratio}{{\rho}}
\begin{document}

\maketitle

\begin{abstract}

In this paper, we consider the Online Traveling Salesperson Problem (OLTSP) where the locations of the requests are known in advance, but not their arrival times. We study both the open variant, in which the algorithm is not required to return to the origin when all the requests are served, as well as the closed variant, in which the algorithm has to return to the origin after serving all the requests.  Our aim is to measure the impact of the extra knowledge of the locations on the competitiveness of the problem. We present an online 3/2-competitive algorithm for the general case and a matching lower bound for both the open and the closed variant. Then, we focus on some interesting metric spaces (ring, star, semi-line), providing both lower bounds and polynomial time online algorithms for the problem.  
\end{abstract}
\section{Introduction}
In the classical Traveling Salesperson Problem (TSP), we are  given a set of locations in a metric space. The objective is to find a tour visiting all the locations minimizing the total traveled time assuming that the traveler moves in constant speed \cite{Lawler91}. Ausiello et al.\ \cite{AusielloFLST01} introduced the Online Traveling Salesperson Problem (OLTSP) where the input arrives over time, i.e., during the travel new requests (locations) appear that have to be visited by the algorithm. The time in which a request is communicated to the traveler (from now on we will refer to him/her as the server) is called release time (or release date).  Since then, a series of papers considered many versions of OLTSP in various metric spaces (general metric space \cite{AusielloFLST01}, the line \cite{AusielloFLST01, BjeldeHDHLMSSS21,GouleakisLS22}, or the semi-line \cite{AusielloDLP04,BernardiniLMMSS22,BlomKPS01}). 
Several applications can be modeled as variants of  OLTSP, e.g. applications in logistics and robotics \cite{AscheuerGKR90, PsaraftisSMK90}. 

The performance of the proposed algorithms for OLTSP has been evaluated in the framework of competitive  analysis, by establishing upper and lower bounds of the competitive ratio which is defined as the maximum ratio between the cost of the online algorithm and the cost of an optimal offline algorithm over all input instances. However, it is admitted that the competitive analysis approach is sometimes very pessimistic as it gives a lot of power to  the adversary. Hence, many papers try  to limit the power of the adversary by  suggesting for instance the notion of a fair adversary \cite{BlomKPS01}, or by giving extra knowledge  and hence more power to the online algorithm by introducing the notion of $\Delta$-time-lookahead \cite{AllulliAL05}, where the online algorithm  knows all the requests that will arrive in the next $\Delta$ units of time, or the $k$-request-lookahead, where the online algorithm has access to the next $k$ requests \cite{AllulliAL05}. In the same vein, Jaillet and Wagner \cite{Jaillet} introduced the notion of disclosure dates, i.e., the dates at which the requests become known to the online algorithm ahead of their release dates. More recently, OLTSP has also been studied in the framework of {\em Learning-Augmented Algorithms} \cite{BernardiniLMMSS22,GouleakisLS22,HuWLCL22}. Here, we study another natural way of giving more power  to the online algorithm by considering that the location of each request is known in advance, but its release date arrives over time. The release date, in this context, is just the time after which a request can be served. This is the case for many applications where the delivery/collection locations are known (parcel collection from fixed storage facilities, cargo collection on a harbour etc.). Think for example of a courier that has to deliver packets to a fixed number of customers. These packets may have to be delivered in person, so the customers should be at home to receive them. Each customer can inform the courier through an app when he/she returns home and is ready to receive the packet. We refer to this problem as the online Traveling Salesperson Problem with known locations (OLTSP-L) and consider two variants: in the {\em closed} variant the server is required to return to the origin after serving all requests, while in the {\em open} one the server does not have to  return to the origin after serving the last  request. 

\paragraph*{Previous results} 
 The offline version of the open variant of the problem in which the requests are known in advance can be solved in quadratic time when the metric space is a line  \cite{PsaraftisSMK90, BjeldeHDHLMSSS21}. In \cite{BjeldeHDHLMSSS21}, Bjelde et al.\ studied the offline closed variant providing a dynamic program based on the one of \cite{PsaraftisSMK90} that solves the problem in quadratic time.
For the online problem, Ausiello et al.\ \cite{AusielloFLST01} showed a lower bound of $2$ for the open version and a lower bound of $1.64$ for the closed version of OLTSP, even when the metric space is the line. They also proposed an optimal $2$-competitive algorithm for the closed version and a $2.5$-competitive algorithm for the open version of OLTSP in general metric spaces. For the line, they presented a $(7/3)$-competitive algorithm for the open case and a $1.75$-competitive algorithm for the closed one. More recently, Bjelde et al., in \cite{BjeldeHDHLMSSS21}, proposed a $1.64$-competitive algorithm for the close case (on the line) matching the lower bound of \cite{AusielloFLST01}. They also provided a lower bound of $2.04$ for the open case on the line, as well as an online algorithm matching this bound.  In \cite{BlomKPS01}, Blom et al.\ proposed a best possible $1.5$-competitive algorithm when the metric space is a semi-line. Chen et al., in \cite{ChuanDLW19}, presented lower and upper bounds of randomized algorithms for OLTSP on the line. 

In \cite{HuWLCL22}, Hu et al.\ introduced learning-augmented algorithms for OLTSP. They proposed three different prediction models. In the first model,  the number of requests is not known in advance and  each request is associated to a prediction for both its release time and its location. In the second model, they assume that the number of requests is given and that, as in the first model,  a prediction corresponds to both the release time and the location of the request. While being closer to our model, no direct comparison can be done between their results and ours. 
In the third model,  the prediction is just the release time of the last request. In \cite{BernardiniLMMSS22}, Bernardini et al.\ focused also on learning-augmented algorithms and they introduced a new error measure appropriate for many online graph problems. For OLTSP, they assume that a prediction corresponds to both the release time and the location of a request.  They study OLTSP for metric spaces, but also the more special case of the semi-line.   Their results are not comparable to ours. In \cite{GouleakisLS22}, Gouleakis et al.\ studied a learning-augmented framework for OLTSP on the line. The authors define a prediction model in  which the predictions correspond to the locations of the requests. They establish lower bounds by assuming that the predictions are perfect, i.e.  the locations  are given,  and the adversary can only control the  release times of the requests.   They also proposed upper  bounds as a function of the value of the error. Let us  note that in the case where the error is equal to 0, their model coincides with ours (see Table~\ref{table:results} for a comparison with our results). 

\section{Our contribution}

In this work we study both the closed (closed OLTSP-L) and the open case of OLTSP-L (open OLTSP-L) and present several lower and upper bounds for the problem. In Table~\ref{table:results}, we give an overview of our results and the state of the art.

\begin{table}[ht]
\centering
    \scalebox{0.95}{
    \begin{tabular}{c||ccc|ccc}
        & \multicolumn{3}{c|}{Open OLTSP-L} & \multicolumn{3}{c}{Closed OLTSP-L} \\  & \multicolumn{1}{c|}{Lower Bound}                     & \multicolumn{2}{c|}{Upper Bound} & \multicolumn{1}{c|}{Lower Bound} & \multicolumn{2}{c}{Upper Bound}
              \\ \hline \hline
        Semi-line & \multicolumn{1}{c|}{\begin{tabular}[c]{@{}c@{}}4/3\\ Thm.~\ref{theo:semilineOpenLB}\end{tabular}}  & \multicolumn{2}{c|}{\begin{tabular}[c]{@{}c@{}}13/9*\\ Thm.~\ref{theo:semilineOpenCompetitiveRatio}\end{tabular}} & \multicolumn{1}{c|}{\textbf{1}} & \multicolumn{2}{c}{\begin{tabular}[c]{@{}c@{}}\textbf{1*}\\ Prop.~\ref{prop:semilineClosedUB}\end{tabular}}                           \\ \hline
        Line      & \multicolumn{1}{c|}{\begin{tabular}[c]{@{}c@{}}13/9\\ \cite[Thm.~4]{GouleakisLS22}\end{tabular}} & \begin{tabular}[c]{@{}c@{}}3/2\\Thm.~\ref{th:3/2general}\end{tabular} & \begin{tabular}[c]{@{}c@{}} 5/3*\\ \cite[Thm.~3]{GouleakisLS22}\end{tabular} & \multicolumn{1}{c|}{\begin{tabular}[c]{@{}c@{}}\textbf{3/2}\\ \cite[Thm.~2]{GouleakisLS22}\end{tabular}} & \multicolumn{2}{c}{\begin{tabular}[c]{@{}c@{}}\textbf{3/2*}\\ \cite[Thm.~1]{GouleakisLS22}\end{tabular}}                                    \\ \hline
        Star      & \multicolumn{1}{c|}{\begin{tabular}[c]{@{}c@{}}13/9\\\cite[Thm.~4]{GouleakisLS22}\end{tabular}} & \multicolumn{2}{c|}{\begin{tabular}[c]{@{}c@{}}3/2\\Thm.~\ref{th:3/2general}\end{tabular}}                                     & \multicolumn{1}{c|}{\begin{tabular}[c]{@{}c@{}}\textbf{3/2}\\\cite[Thm.~2]{GouleakisLS22}\end{tabular}} & \begin{tabular}[c]{@{}c@{}}\textbf{3/2}\\Thm.~\ref{th:3/2general}\end{tabular} & \begin{tabular}[c]{@{}c@{}}$(7/4+\epsilon)^*$ \\ Cor.~\ref{cor:StarClosedCompetitiveRatioPolyTime}\end{tabular}
                \\ \hline
        Ring      & \multicolumn{1}{c|}{\begin{tabular}[c]{@{}c@{}}\textbf{3/2}\\ Prop.~\ref{prop:ringOpenLB}\end{tabular}}  & \multicolumn{2}{c|}{\begin{tabular}[c]{@{}c@{}}\textbf{3/2}\\Thm.~\ref{th:3/2general}\end{tabular}}                                     & \multicolumn{1}{c|}{\begin{tabular}[c]{@{}c@{}}\textbf{3/2}\\\cite[Thm.~2]{GouleakisLS22}\end{tabular}} & \begin{tabular}[c]{@{}c@{}}\textbf{3/2}\\Thm.~\ref{th:3/2general}\end{tabular} & \begin{tabular}[c]{@{}c@{}}5/3*\\ Thm.~\ref{theo:ringClosedCompetitiveRatio2}\end{tabular}
                \\ \hline
        General   & \multicolumn{1}{c|}{\begin{tabular}[c]{@{}c@{}}\textbf{3/2}\\Prop.~\ref{prop:ringOpenLB} \end{tabular}}  & \multicolumn{2}{c|}{\begin{tabular}[c]{@{}c@{}}\textbf{3/2}\\ Thm.~\ref{th:3/2general}\end{tabular}} & \multicolumn{1}{c|}{\begin{tabular}[c]{@{}c@{}}\textbf{3/2}\\\cite[Thm.~2]{GouleakisLS22}\end{tabular}} & \multicolumn{2}{c}{\begin{tabular}[c]{@{}c@{}}\textbf{3/2}\\ Thm.~\ref{th:3/2general}\end{tabular}}                                      
    \end{tabular}
    }
    \caption{Results for the open and closed variants of OLTSP-L. Polynomial time algorithms are denoted by * and tight results in bold.}
    \label{table:results}
\end{table}

We first consider general metric spaces (dealt with in Section~\ref{GeneralMetrics}). For the closed version, a lower bound of 3/2 has been shown in~\cite{GouleakisLS22} (valid in the case of a line). We show that a lower bound of $3/2$ holds also for the open case (on rings). We then provide a $3/2$-competitive online algorithm for both variants, thus matching the lower bounds. Although our algorithm does not run in polynomial time, the online nature of the problem is a source of difficulty independent of its computational complexity. Thus, such algorithms are of interest even if their running time is not polynomially bounded. 

However, it is natural to also focus on polynomial time algorithms. We provide several bounds in specific metric spaces. In Section~\ref{Ring}, we focus on rings and present a polytime $5/3$-competitive algorithm for the closed OLTSP-L. Next, we give a polytime $(7/4+\epsilon)$-competitive algorithm, for any constant $\epsilon > 0$, in the closed case on stars (Section~\ref{Stars}). In Section~\ref{SemiLine}, we study the problem on the semi-line. We present a simple polytime $1$-competitive algorithm for the closed variant and, for the open case, we give a lower bound of $4/3$ and an upper bound (with a polytime algorithm) of $13/9$.


To measure the gain of knowing the locations of the requests, we also provide some lower bounds on the case where the locations are unknown, and more precisely on the case where the number of locations is known but not the locations themselves. 
In the open case of the problem, there is a lower bound of $2$ in~\cite{AusielloFLST01} on the line  that holds even when the number of requests is known. Hence the same lower bound holds for  the star and for the ring. We provide  a lower bound of $3/2$  (Proposition~\ref{prop:KnownNumberOpenSemiline}) for the semi-line.
For the closed case, we show a lower bound of 2 for the ring (Proposition~\ref{prop:KnownNumberClosedRing}) and the star (Proposition~\ref{prop:KnownNumberStar}). 
We also present a lower bound of  $4/3$ (Proposition~\ref{prop:KnownNumberClosedSemiline}) for the semi-line. For the line, it is easy to see (with a trivial modification in the proof) that the lower bound of $1.64$ on the line~\cite{AusielloFLST01} still holds in this model where we know the number of locations. These lower bounds show that knowing only the number of requests is not sufficient to get better competitive ratios in most cases, and thus it is meaningful to consider the more powerful model with known locations.

\section{Preliminaries}

The input of OLTSP-L consists of a metric space $M$ with a distinguished point $O$ (the origin), and a set $Q=\{q_1,...,q_n\}$ of $n$ requests. Every request $q_i$ is a pair ($t_i, p_i$), where $p_i$ is a point of $M$ (which is known at $t=0$) and $t_i \ge 0$ is a real number. The number $t_i$ represents the moment after which the request $q_i$ can be served (release time). A server located at the origin at time $0$, which can move with at most unit speed, must serve all the requests after their release times with the goal of minimizing the total completion time (makespan).

When we refer to general metric spaces or general metrics, we mean the wide class of metric spaces $M$ defined in \cite{AusielloFLST01}. This class contains all continuous metric spaces which have the property that the shortest path from $x\in M$ to $y \in M$ is continuous in $M$ and has length $d(x,y)$. 
We note that our 3/2-competitive algorithm for general metric spaces also works for discrete metric spaces (where you do not continuously travel from one point to another, but travel in a discrete manner from point $x$ at time $t$ to point $y$ at time $t+d(x,y)$). 

For the rest of the paper, we denote the total completion time of an online algorithm $\alg$ by $|\alg|$ and that of an optimal (offline) solution $\opt$ by $|\opt|$. We recall that an algorithm $\alg$ is $r$-competitive if on all instances we have $|\alg|\leq r \cdot |\opt|$. We will often use $\ratio$ to denote $\frac{|\alg|}{|\opt|}$ and $t$ to quantify time.

\section{General metrics} \label{GeneralMetrics}

As mentioned earlier, \cite{GouleakisLS22} showed a lower bound of 3/2 for closed OLTSP-L, even in the case of a line. In this section, we first show that the same lower bound of 3/2 holds for open OLTSP-L (in the case of a ring). We then devise a 3/2-competitive algorithm, for general metrics, both in the open and in the closed cases, thus matching the lower bounds in both cases. 

Let us first show the lower bound for open OLTSP-L.

\begin{proposition}\label{prop:ringOpenLB}
    For any $\epsilon > 0$, there is no $(3/2-\epsilon)$-competitive algorithm for open OLTSP-L on the ring.
\end{proposition}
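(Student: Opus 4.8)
The plan is to use an adaptive adversary argument. Since in OLTSP-L the locations are known at $t=0$, the adversary's only freedom is the choice of release times, which it reveals online while watching the server. On a ring, the essential feature we want to exploit is that to reach the far (roughly antipodal) part of the ring the server must commit to a rotational direction, and once it has committed it cannot cheaply recover a request that appears on the opposite side. I would fix a ring of suitable circumference $C$ (to be optimized) with origin $O$, announce a small set of request locations at $t=0$, and show that for every online algorithm there is a release schedule with $|\alg| \ge (3/2-\epsilon)\,|\opt|$.

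Concretely, I would place an ``anchor'' request essentially antipodal to $O$ whose release forces the server to move toward the far side (so that this far point dominates the makespan and waiting at $O$ is itself penalized), together with at least one request held back on the opposite side of the ring. The adversary watches the server up to a critical time and then branches on the server's angular position/direction: it releases the remaining request(s) on the side the server has turned \emph{away} from. The server is then caught on the wrong side and must either reverse (paying twice a long arc) or continue around the ring (paying the remaining circumference), whereas the offline optimum, which knows the realized schedule from the start, serves every request in a single well-timed sweep in the correct direction, arriving at each request essentially at its release time. I would then bound $|\alg|$ from below by the forced detour and $|\opt|$ from above by the length of this sweep, and choose $C$ and the release times so that both branches of the case analysis yield the same ratio $3/2$.

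The main obstacle is twofold. First, the analysis must be robust against \emph{every} server strategy, not just the ``greedy'' one: a server may wait at the origin, may hedge by oscillating near the antipode, or may only partially commit to a direction before the critical time, and each such behavior must fall into a branch whose release schedule still forces $\ratio \ge 3/2-\epsilon$. Second, and most delicate, is tuning the cyclic geometry so that the bound reaches \emph{exactly} $3/2$ rather than a weaker value: a naive two-request antipodal construction only yields a ratio around $4/3$, which is below the $13/9$ already known for the line, so the construction must genuinely use the topology of the ring (the impossibility of ``seeing past'' the antipode and the symmetry between the two arcs) to push the bound strictly above the line bound and up to the target $3/2$ that matches the $3/2$-competitive algorithm of Theorem~\ref{th:3/2general}. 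Verifying tightness of the optimization over $C$ and the release times, and checking that the adversary's branching indeed dominates all server responses, is the part I expect to require the most care.
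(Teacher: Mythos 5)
There is a genuine gap: what you have written is a strategy outline, not a proof. You never fix the locations, the critical time, the release schedule, or the case analysis, and you explicitly defer the hardest part (``tuning the cyclic geometry so that the bound reaches exactly $3/2$''). Moreover, the intuition guiding your search is off: you dismiss two-request constructions on the grounds that ``a naive two-request antipodal construction only yields a ratio around $4/3$,'' and conclude that something structurally richer is needed. In fact the paper achieves $3/2$ with exactly two requests --- the point is not to place them antipodally but \emph{symmetrically}: on a ring of circumference $1$, put $A$ and $B$ at clockwise distances $2/3$ and $1/3$ from $O$, so that $O$, $A$, $B$ are pairwise at ring-distance $1/3$. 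At time $t=1/3$ the server is, by symmetry, w.l.o.g.\ on the arc $[OA]$ not containing $B$; every point of that arc is at distance at least $1/3$ from $B$. The adversary releases $B$ at $t=1/3$ and $A$ at $t=2/3$. The offline optimum sweeps toward $B$, serving $B$ at $1/3$ and $A$ at $2/3$, so $|\opt|=2/3$; the online server cannot serve its first request before $2/3$ (either $A$ is simply not released earlier, or $B$ is at distance $\ge 1/3$) and then still needs $d(A,B)=1/3$ for the second, so $|\alg|\ge 1$ and the ratio is $3/2$.

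So the missing idea is precisely this equilateral placement, which makes the adversary's branching trivial (a single symmetry reduction at $t=1/3$) rather than the delicate multi-branch optimization over $C$ and release times that you anticipate. Your broader framing --- adaptive adversary controlling only release times, exploiting the server's forced directional commitment --- is consistent with the paper, but without the concrete construction the argument does not close, and the claim that two requests cannot reach $3/2$ would have steered you away from the correct (and simplest) instance.
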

\begin{proof}
    Consider a ring with circumference 1 with 2 requests $A$ and $B$, with a distance of $1/3$ from $O$ and from each other as visualized in Figure~\ref{fig:circleLBRingOpen}. At time $t=1/3$, without loss of generality due to symmetry, we can assume that the algorithm is somewhere in the segment (arc of the ring) $[OA]$ (including both $O$ and $A$). Then, $B$ is released. The second request $A$ is released at time $2/3$. Hence, $\opt$ can visit $A$ and $B$ in time $2/3$, whereas the online algorithm cannot finish before $t=1$. Whether it serves $A$ or $B$ first, it cannot serve the first request before time $t=2/3$ and will have to go a distance of $1/3$ to serve the second request, as well.
\end{proof}

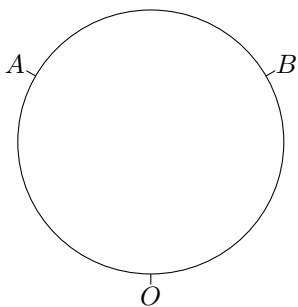
\begin{figure}[ht]
\begin{center}
\begin{tikzpicture}[scale=0.7]
	\draw (0,0) arc (270:-90:2.5);  
	\node[] at (0,-0.425) (0) {$O$};
    \draw (0,0) -- (0,-0.2);
	
	\node[] at ({2.95*cos(150) }, {2.5+2.95*sin(150) }) (0) {$A$};
 \draw ({2.5*cos(150)},{2.5+2.5*sin(150)}) -- ({2.7*cos(150)},{2.5+2.7*sin(150)});
\node[] at ({2.95*cos(30) }, {2.5+2.95*sin(30) }) (0) {$B$};
 \draw ({2.5*cos(30)},{2.5+2.5*sin(30)}) -- ({2.7*cos(30)},{2.5+2.7*sin(30)});
\end{tikzpicture}
   \caption{Lower bound instance for open OLTPS-L on the ring}
    \label{fig:circleLBRingOpen}
\end{center}
\end{figure}

Now, let us present the 3/2-competitive algorithm. Roughly speaking, the principle of the algorithm is: \begin{itemize}
    \item first, to wait (at the origin)  a well chosen amount of time $T$. This time $T$ depends both on the locations of the requests and on the time they are released. It is chosen so that (1) the optimal solution cannot have already visited (served the requests on) a large part of its tour (closed case) / path (open case) and (2) there is a tour/path for which a large part is fully revealed (which is a good tour to follow if not too long).
    \item then, to choose an order of serving requests that optimizes some criterion mixing the length of the corresponding tour/path and the fraction of it which is released at time $T$, and to follow this tour/path (starting at time $T$), waiting at requests if they are not released.
\end{itemize}

More formally, we consider Algorithm~\ref{algo:general}, which uses the following notation (see Example~\ref{ex:gen} for an example illustrating the notation and the execution of the algorithm). For a given order $\sigma_i$ on the requests (where $\sigma_i(1)$ denotes the first request in the order, $\sigma_i(2)$ the second request, \dots), we denote:
\begin{itemize} \item by $\ell_i$ the length of the tour/path associated to $\sigma_i$ (starting at $O$), i.e., $\ell_i=d(O,\sigma_i(1))+\sum_{j=1}^{n-1}d(\sigma_i(j),\sigma_i(j+1))$ in the open case, $\ell_i=d(O,\sigma_i(1))+\sum_{j=1}^{n-1}d(\sigma_i(j),\sigma_i(j+1))+d(\sigma_i(n),O)$ in the closed case; \item by $\alpha_i(t)$ the fraction of the tour/path associated to $\sigma_i$, starting at $O$, which is fully released at time $t$. More formally, if requests $\sigma_i(1),\dots,\sigma_i(k-1)$ are released at $t$ but $\sigma_i(k)$ is not, then the tour/path is fully released up to $\sigma_i(k)$, and $\alpha_i(t)=\frac{d(O,\sigma_i(1))+\sum_{j=1}^{k-1} d(\sigma_i(j),\sigma_i(j+1))}{\ell_i}$.
\end{itemize}

\begin{algorithm}[ht]
	\KwIn{Offline: Request locations $p_1, \dots, p_n$\\
	\phantom{\textbf{Input:} }Online: Release times $t_1, \dots, t_n$}
	Let $\sigma_1,\dots,\sigma_{n!}$ be the orders of requests.\\
	
	For all $i\leq n!$, compute $\ell_i$ the length of the tour/path associated to $\sigma_i$.\\
	
	Wait at $O$ until time $T$ defined as the first  time $t$ such that there exists an order $\sigma_{i_0}$ with (1) $t\geq \ell_{i_0}/2$ and (2) $\alpha_{i_0}(t)\geq 1/2$.\\
	
	At time $T$:
	\begin{itemize}\item Compute an order $\sigma_{i_1}$ which minimizes, over all orders $\sigma_i$ ($1\leq i\leq n!$), 
	$(1-\beta_i)\ell_i$,\\
    where  $\beta_i=\min\{\alpha_{i}(T),1/2\}$.

	\item Follow (starting at time $T$) the tour/path associated to $\sigma_{i_1}$, by serving the \\
    requests in this order, waiting at a request location if this request is not released.
	\end{itemize}
    \caption{Algorithm for closed and open OLTSP-L }
	\label{algo:general}
\end{algorithm}

\begin{example}\label{ex:gen}
Let us consider the example of Figure~\ref{fig:gen} with three requests $q_1,q_2,q_3$, released respectively at time $2$, 6 and 8. We focus on the closed case.
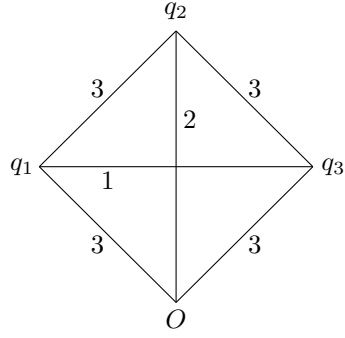
\begin{figure}[ht]
\begin{center}
\begin{tikzpicture}[scale=0.9]
	\node[] at (0,-0.25) (0) {$O$};
    \draw (0,0) -- (2,2);
	\node[] at (2.3,2) (0) {$q_3$};
    \draw (2,2) -- (0,4);
	\node[] at (0,4.3) (0) {$q_2$};
    \draw (-2,2) -- (0,4);
	\node[] at (-2.25,2) (0) {$q_1$};
    \draw (0,0) -- (-2,2);
    \draw (-2,2) -- (2,2);
    \draw (0,0) -- (0,4);    
	\node[] at (1.15,3.15) (0) {$3$};
	\node[] at (-1.15,3.15) (0) {$3$};
	\node[] at (1.15,0.85) (0) {$3$};
	\node[] at (-1.15,0.85) (0) {$3$};
	\node[] at (0.2,2.7) (0) {$2$};
	\node[] at (-1,1.8) (0) {$1$};

\end{tikzpicture}
   \caption{An example with 3 requests. Weights correspond to distances.}
    \label{fig:gen}
\end{center}
\end{figure}

Let us consider the order $\sigma_0=(q_1,q_2,q_3)$, with $\ell_0=12$ (closed case). Then for $0\leq t<2$ we have $\alpha_0(t)=1/4$, for $2\leq t<6$ we have $\alpha_0(t)=1/2$, for $6\leq t<8$ we have $\alpha_0(t)=3/4$, and for $t\geq 8$ we have $\alpha_0(t)=1$.

Now, let us look at the algorithm, and first the determination of $T$ (line 3 of the algorithm). At time $6$, we have $6\geq \ell_0/2$ and $\alpha_0(6)\geq 1/2$, so $T\leq 6$. At any time $t<6$: on the one hand $t<\ell_0/2$, and on the other hand neither $q_2$ nor $q_3$ is released, and one can see that no tour $\sigma_i\neq \sigma_0$ has $\alpha_i(t)\geq 1/2$. This means that $T=6$, and that $i_0=0$. 

So the algorithm starts moving at $T=6$. The order minimizing $(1-\beta_i)\ell_i$ is $\sigma_{i_1}=(q_2,q_1,q_3)$ (with $\ell_{i_1}=9$, $\alpha_{i_1}(T)=6/9$, so $\beta_{i_1}=1/2$ and $(1-\beta_{i_1})\ell_1=9/2$), so the algorithm will follow this tour, serving $q_2$ at $6+1=7$, $q_1$ at 10, $q_3$ at 12 and being back in $O$ at 15.
\end{example}

\begin{theorem}\label{th:3/2general}
Algorithm~\ref{algo:general} is 3/2-competitive both for  closed and open OLTSP-L.
\end{theorem}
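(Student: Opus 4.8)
My plan is to bound the makespan of Algorithm~\ref{algo:general} by splitting it into two contributions and controlling each against $|\opt|$; the argument will be uniform for the open and closed variants, the only difference being whether the lengths include the final leg back to $O$. Write $i_1$ for the order chosen at time $T$, $\ell=\ell_{i_1}$ for its length, $\beta=\beta_{i_1}=\min\{\alpha_{i_1}(T),1/2\}$, and set $f=(1-\beta)\ell$ (the ``unreleased part'' of the chosen tour/path at time $T$). Let $d_k$ be the distance traversed along $\sigma_{i_1}$ up to its $k$-th served request and $r_k=\ell-d_k$ the length remaining after it. Since after time $T$ the server merely follows $\sigma_{i_1}$, waiting at a request only until its release, unrolling the arrival-time recursion gives
\[
|\alg|\le \max\Big(T+\ell,\ \max_{k}\big(t_{\sigma_{i_1}(k)}+r_k\big)\Big),
\]
and I would show that each of the two terms is at most $\tfrac32|\opt|$.

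First I would collect the elementary inequalities. By the definition of $T$ there is a witness order $\sigma_{i_0}$ with $\alpha_{i_0}(T)\ge 1/2$ and $T\ge \ell_{i_0}/2$, so $(1-\beta_{i_0})\ell_{i_0}=\ell_{i_0}/2\le T$; minimality of $(1-\beta_i)\ell_i$ then yields $f\le T$. Trivially $\beta\le 1/2$, and since $\opt$ serves each request after its release, $t_{\sigma_{i_1}(k)}\le |\opt|$ for all $k$. Any request \emph{not} released at time $T$ lies beyond the released prefix of $\sigma_{i_1}$, so its position satisfies $d_k\ge \alpha_{i_1}(T)\ell\ge \beta\ell$, whence $r_k\le (1-\beta)\ell=f$. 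Finally, evaluating the two conditions of line~3 at time $|\opt|$ for $\opt$'s own order (everything is released and $|\opt|\ge \ell_\pi\ge \ell_\pi/2$) gives $T\le |\opt|$.

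The crux is the lower bound
\[
|\opt|\ \ge\ T+f,
\]
which together with $f\le T$ forces $f\le |\opt|/2$. To prove it I would fix the order $\pi$ in which $\opt$ serves the requests, let $L=\ell_\pi$, and establish $|\opt|\ge T+f_\pi$, using $f\le f_\pi$ by minimality. The basic tool is that after serving its $j$-th request $\pi(j)$, $\opt$ still traverses length $\ge L-D_j$, so $|\opt|\ge t_{\pi(j)}+(L-D_j)$ for every $j$, where $D_j$ is the distance along $\pi$ to reach $\pi(j)$. If $\alpha_\pi(T)<1/2$, taking $\pi(j)$ to be the first request unreleased at $T$ gives $D_j=\alpha_\pi(T)L$ and $t_{\pi(j)}>T$, hence $|\opt|> T+(L-D_j)=T+f_\pi$. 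If $\alpha_\pi(T)\ge 1/2$, then $f_\pi=L/2$, and I would invoke the minimality of $T$: since $\pi$ meets both conditions at time $\max\{T_2(\pi),L/2\}$, where $T_2(\pi)$ is the first time $\alpha_\pi\ge 1/2$, this forces either $T\le L/2$ (and then $T+L/2\le L\le|\opt|$) or $T=T_2(\pi)$ with a request released exactly at $T$ at tour-distance $D_j<L/2$, so that $|\opt|\ge t_{\pi(j)}+(L-D_j)>T+L/2=T+f_\pi$. I expect this step---extracting $|\opt|\ge T+f$ from the two-part definition of $T$---to be the main obstacle, since it is precisely where the thresholds $1/2$ and the minimized quantity $(1-\beta_i)\ell_i$ must be reconciled with the adversary's freedom in ordering and timing.

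With $f\le|\opt|/2$ in hand the two terms follow quickly. For any request $k$ not released at $T$, $t_{\sigma_{i_1}(k)}+r_k\le |\opt|+f\le \tfrac32|\opt|$; for a request released by $T$, $t_{\sigma_{i_1}(k)}+r_k\le T+(\ell-d_k)\le T+\ell$, so the second term is dominated by the first. For the first term I would substitute $\ell=f/(1-\beta)$ into $T+\ell$ and use $|\opt|\ge T+f$ with $f\le T$ and $\beta\le 1/2$: the target $T+\ell\le\tfrac32|\opt|$ reduces to $\ell\big(\tfrac32\beta-\tfrac12\big)\le \tfrac12 T$, which is immediate when $\beta\le 1/3$ (the left side is nonpositive) and, when $1/3<\beta\le 1/2$, reduces after clearing denominators exactly to $\beta\le 1/2$. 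Hence both terms are at most $\tfrac32|\opt|$, proving $3/2$-competitiveness in both variants.
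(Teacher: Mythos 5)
Your proof is correct and follows essentially the same route as the paper's: the two key inequalities $(1-\beta_{i_1})\ell_{i_1}\le T$ and $|\opt|\ge T+(1-\beta_{i_1})\ell_{i_1}$, followed by the split into the ``never waits after $T$'' term $T+\ell$ and the ``waits'' term bounded by a release time plus the unreleased suffix. The only cosmetic differences are that you prove the lower bound on $|\opt|$ in the case $\alpha_\pi(T)\ge 1/2$ via release times and the minimality of $T$ rather than by tracking $\opt$'s position on its tour, and you package the two cases for $\alg$ as a single unrolled recursion; both variants are sound.
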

\begin{proof}

    At time $T$, there exists $\sigma_{i_0}$ with $T\geq \ell_{i_0}/2$ and $\alpha_{i_0}(T)\geq 1/2$. Then $\beta_{i_0}=1/2$, and $(1-\beta_{i_0})\ell_{i_0}=\ell_{i_0}/2\leq T$. By definition of  $\sigma_{i_1}$, we have $(1-\beta_{i_1})\ell_{i_1}\leq (1-\beta_{i_0})\ell_{i_0}$. So we get
    \begin{equation}\label{eq:gen1}
     (1-\beta_{i_1})\ell_{i_1}\leq T\enspace.
    \end{equation}
    
    Now, let us consider an optimal solution $\opt$, and denote by $\sigma_{i^*}$ the order in which $\opt$ serves the requests. We consider w.l.o.g.\ that $\opt$ follows the tour/path $\sigma_{i^*}$, waiting only at requests' positions\footnote{Indeed, if $\opt$ does not do this we can easily transform it into another optimal solution (with the same order of serving requests) that acts like this.}: it goes from $O$ to (the position of) $\sigma_{i^*}(1)$ in time $d(O,\sigma_{i^*}(1))$, waits at (the position of) $\sigma_{i^*}(1)$ if the request is not released, then from $\sigma_{i^*}(1)$ to $\sigma_{i^*}(2)$ in time $d(\sigma_{i^*}(1),\sigma_{i^*}(2))$, \dots 
    \begin{itemize}
        \item If $\alpha_{i^*}(T)\leq 1/2$, then $\beta_{i^*}=\alpha_{i^*}(T)$, and  $|\opt|\geq T+(1-\alpha_{i^*}(T))\ell_{i^*}=T+(1-\beta_{i^*})\ell_{i^*}$.
        \item If $\alpha_{i^*}(T) > 1/2$, then $\beta_{i^*}=1/2$. We look at the position of $\opt$ at time $T$ in the tour/path associated to $\sigma_{i^*}$. Suppose that it is (strictly) on the second half of this tour/path. Then, at $T-\epsilon$ (for a sufficiently small $\epsilon>0$), it was already on the second part. But then $T-\epsilon\geq \ell_{i^*}/2$ (as $\opt$ has already visited half of the tour/path), and $\alpha_{i^*}(T-\epsilon)\geq 1/2$ (for the same reason). Then $T-\epsilon$ would fulfill the two conditions in Line 3 of the algorithm, a contradiction with the definition of $T$. Consequently,  at $T$, $\opt$ is in the first half of its tour/path. So $|\opt|\geq T+\ell_{i^*}/2=T+(1-\beta_{i^*})\ell_{i^*}$. 
    \end{itemize}
    Then, in both cases we have $|\opt|\geq T+(1-\beta_{i^*})\ell_{i^*}$. By definition of $\sigma_{i_1}$ we deduce that
    \begin{equation}\label{eq:gen2}
     T+(1-\beta_{i_1})\ell_{i_1} \leq |\opt|\enspace.
    \end{equation}
    
    Now we look at the value $|\alg|$ of the solution $\alg$ output by the algorithm. We distinguish two cases:
    \begin{itemize}
        \item Suppose that $\alg$ never waits after $T$. Then $|\alg|=T+\ell_{i_1}$. As by definition $\beta_{i_1}\leq 1/2$, 
        $|\alg|\leq T+2(1-\beta_{i_1})\ell_{i_1}$. But adding Equations~(\ref{eq:gen1}) and~(\ref{eq:gen2})
        with coefficients $1/2$ and $3/2$, we get
        $T+2(1-\beta_{i_1})\ell_{i_1}\leq 3|\opt|/2$. 
        Hence, $|\alg|\leq 3|\opt|/2$.  
        \item Suppose now that $\alg$ waits after $T$ for some request to be released. Let $t^*$ be the last time $\alg$ waits. As a fraction $\alpha_{i_1}(T)$ of $\sigma_{i_1}$ is completely released at $T$ (i.e., when $\alg$ starts), $\alg$ has distance at most  $(1-\alpha_{i_1}(T))\ell_{i_1}$ to perform after $t^*$. So
    \begin{equation}\label{eq:gen3}
     |\alg|\leq t^*+(1-\alpha_{i_1}(T))\ell_{i_1} \leq t^*+(1-\beta_{i_1})\ell_{i_1}\enspace,
    \end{equation}
    where we use the fact that, by definition,  $\beta_{i_1}\leq \alpha_{i_1}(T)$.
    We have $t^*\leq |\opt|$, as a request is released at $t^*$. Adding Equations~(\ref{eq:gen1}) and~(\ref{eq:gen2}) gives $2(1-\beta_{i_1})\ell_{i_1}\leq |\opt|$. Putting these two inequalities in Equation~(\ref{eq:gen3}) gives $|\alg|\leq 3|\opt|/2$. 
    \end{itemize}
\end{proof}

Note that Algorithm~\ref{algo:general} also solves the more general Online Asymmetric Traveling Salesperson Problem with known locations and achieves again a ratio of $3/2$ for both variants. In the asymmetric version of TSP, the distance $d(a,b)$ from one point $a$ to another point $b$ in a given space can be different from the inverse distance $d(b,a)$. For the problem without the knowledge of the locations, there is an optimal $2.62$-competitive algorithm for the closed variant, and it has been proved that there is no constant competitive algorithm for the open variant~\cite{AUSIELLO2008290}.

\section{Ring} \label{Ring}

In this section, we discuss the problem on the metric space induced by the border of a ring. Without loss of generality, we can assume that the ring is a circle with a circumference of 1. The distance $d(x,y)$ between points $x,y$ on the border of the circle is the smaller arc length of the segment between $x$ and $y$ (or $y$ and $x$). We denote the location of points on the ring by their clockwise distance from the origin. 
Further, we assume that the requests are ordered such that $p_i$ is located next to $p_{i+1}$ for all $i \in [n-1]$ and $p_1$ and $p_n$ are the closest requests to the origin on their respective sides. 

 In the closed case, for the classic online TSP, a lower bound of 2 on the competitive ratio for this metric space was shown in \cite{AusielloFLST01}. Note that the algorithm does not know the number of requests to be released,  but even an algorithm with this information would not be able to improve upon the competitive ratio of 2.
More formally, we have the following statement (a formal proof can be found in Appendix~\ref{app:KnownNumberClosedRing}).
\begin{proposition} \label{prop:KnownNumberClosedRing}
	When locations are unknown, for any $\epsilon > 0$, there is no $(2-\epsilon)$-competitive algorithm for closed OLTSP on the ring, even if the number of requests is known.
\end{proposition}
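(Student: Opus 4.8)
The plan is to exhibit an adversarial strategy on the ring that forces any deterministic algorithm (even one knowing $n$, the number of requests) to incur cost at least $2-\epsilon$ times the optimum. The natural idea is to mimic the classic lower bound of \cite{AusielloFLST01} but to make the number of requests fixed and known from the start, so that the only missing information is the \emph{locations} (and the release times). I would place two initial requests near the origin on opposite sides, and keep the remaining requests in reserve to be released adaptively depending on which direction the server commits to.

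\textbf{Construction.} Consider the circle of circumference $1$ with origin $O$. At time $t=0$ release two requests at distance $\delta$ from $O$ on either side (clockwise and counterclockwise), for some small $\delta>0$. Force the algorithm to serve both of these, so that at some time $t_0$ it must have traversed essentially an arc of length close to $2\delta$ around the origin and returned; more importantly, at the critical moment the server must pick a direction around the ring. The adversary then releases the remaining requests far around the ring in the direction \emph{opposite} to where the server currently is, forcing a full traversal in the worst direction. The key quantities to track are the release times $t_i$ and the positions $p_i$: the adversary chooses positions so that $\opt$, knowing everything offline, can travel in the single good direction and return to $O$ in time roughly $1$ (one full loop), while any online algorithm that guessed the direction wrong is forced to backtrack and ends up covering close to two full loops, giving a ratio approaching $2$.

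\textbf{Analysis.} First I would fix the adversary's response rule: observe the server's position at a well-chosen decision time and release the late requests so as to maximize the remaining distance the server must travel in the closed tour. Then I would bound $|\opt|$ from above by constructing an explicit offline tour that commits to the correct direction immediately and serves all requests in one sweep around the ring, returning to $O$; since the requests can be packed into an arc of length close to $1$, this gives $|\opt|\le 1+O(\delta)$. Next I would bound $|\alg|$ from below: because the algorithm does not know the locations of the reserved requests, at the decision time it is committed to some direction, and the adversary places the remaining requests so that the server must still cover an arc of length close to $1$ in addition to what it has already traversed, yielding $|\alg|\ge 2-O(\delta)$. Taking $\delta\to 0$ then forces the ratio above $2-\epsilon$.

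\textbf{Main obstacle.} The delicate part is making the argument robust to the algorithm knowing $n$. Knowing the number of requests could in principle let the algorithm hedge, so I must ensure the adversary's reserved requests can be positioned in either direction symmetrically, so that regardless of the algorithm's (deterministic) choice of direction at the decision time, the adversary places the cluster on the opposite side. The release times must also be tuned: each reserved request should be released only after the server has irrevocably committed, yet early enough that $\opt$ (which moves in the right direction from the start) is never blocked by an unreleased request. Balancing these release times against the positions so that simultaneously $|\opt|$ stays near $1$ and $|\alg|$ is pushed near $2$ is the crux; the symmetry of the ring is what makes this possible, and I expect the formal verification of these two bounds, left to Appendix~\ref{app:KnownNumberClosedRing}, to be the main technical work.
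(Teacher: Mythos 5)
Your construction has a genuine gap, and the late requests are placed on the wrong side of the server. First, with only two requests at $\pm\delta$ released at $t=0$, nothing forces the online server to ``commit to a direction'': it can serve both, return to $O$, and wait there, and from $O$ it can sweep any later-revealed arc in whichever direction is cheaper, so there is no ``opposite side'' to exploit. More importantly, the tension you flag in your last paragraph is not resolvable in your setup: a request at clockwise position $p$ is reached by $\opt$'s single (say counterclockwise) loop only at time $1-p$, so to keep $|\opt|\approx 1$ every reserved request must be released by time $1-p$; but a request on the \emph{unvisited, far} side of the server that is released that early is one the server can still pick up on its way around, so it gains the adversary nothing, while releasing it later to surprise the server immediately forces $|\opt|\ge t_i+d(p_i,O)$ well above $1$. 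This is why ``packing the reserved requests into an arc of length close to $1$ opposite the server'' cannot simultaneously give $|\opt|\le 1+O(\delta)$ and $|\alg|\ge 2-O(\delta)$.

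The paper's proof (adapting \cite{AusielloFLST01}) resolves this differently. It first releases, at $t=0$, roughly $2(n-1)/3$ requests \emph{equidistantly around the entire ring} (spacing at most $\epsilon/4$), which forces any algorithm to traverse essentially the whole circumference; the known count $n$ tells the server only that about $(n-1)/3$ more requests are coming, not where. It then applies an intermediate-value argument to the server's distance from $O$ to find a time $1/2+\alpha$ at which the server sits at distance exactly $1/2-\alpha$ from $O$, say at clockwise position $1/2-\alpha$, having wasted at most $2\alpha$ on doubled-back arcs. At that instant the remaining requests are dropped equidistantly on $[0,1/2-\alpha]$, i.e., on the arc the server has \emph{already} visited. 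These are exactly the points that the counterclockwise loop reaches at times $1-p\ge 1/2+\alpha$, so $|\opt|=1$, while the online server still owes an unvisited arc of length at least $1/2$ plus a re-traversal of $[0,1/2-\alpha]$, yielding $|\alg|\ge 2-\epsilon/2$. To repair your argument you need both ingredients you are missing: the dense initial covering that pins down the server's trajectory and quantifies its commitment, and the behind-the-server placement that keeps $\opt$ at $1$; the opposite-side placement you describe is the right intuition for the line or semi-line, not for the closed tour on the ring.
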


Using the location data of the requests, we now describe an algorithm beating the competitive ratio of 2 in polynomial time.
For a simpler exposition, we assume that the shortest time to visit all request locations when ignoring release times is 1. This means that the shortest way to visit all requests is a round trip on the ring. Otherwise, the instance can be interpreted as an instance on the line. 
	
We illustrate the idea of the algorithm in Figure~\ref{fig:RingAlgo}. It first deals with a specific case, when there is a large interval without any request inside (so $[p_i,p_{i+1}]$ is large for some $i$). This is shown on the left in the figure. In this case the server goes to $p_i$ or $p_{i+1}$ (the farmost from $O$). Afterwards, it moves back to $O$, serving requests along the way and waiting for their release if necessary. Then, it goes to the other extremity. There, it turns around, moving back to $O$, again serving the requests on the way, waiting for their release if necessary. 
For the other case, shown on the right in the figure, when there is no such large empty interval, the server acts as follows: First, it waits at the origin to see whether to take a clockwise or a counter-clockwise tour around the ring. In order to decide this, the server waits for a contiguous part of at least $1/3$ of the ring to be released in either half of the ring. More precisely, all consecutive requests inside a segment of length at least $1/3$ are released and the whole segment lies completely in $(0,1/2]$ or in $[1/2, 1)$. We denote the time that this occurs by $t^{(1)}$. The server then starts moving on the shortest path to segment $s$, serving released requests it encounters on the way. It then continues in the same direction and serves segment $s$ (which has been completely released). Having visited $s$, the server continues in the same direction and stops at the first unreleased request on the tour around the ring. From then on, the server always waits for the request at the current location to be released before moving on to the next unserved request. Once the server reaches the origin, there might be some requests which have not been served because they were unreleased when the server passed their locations. To complete the TSP tour, the server will therefore go to the furthest unserved request and back to the origin, waiting at any unreleased request if necessary. 
\begin{figure}[ht]
	\begin{center}
		\begin{tikzpicture}[scale=0.7]
			\begin{scope}
			\draw[dotted] (0,0) arc (270:-90:2.5);  
			\draw[thick] ({2.5*cos(190)},{2.5 + 2.5*sin(190)}) arc [start angle=190, delta angle=210, radius=2.5cm]; 
			\node[] at (0,-0.5) (0) {$O$};
			\draw (0,0.25) -- (0,-0.25);
%
			\node[] at ({2.9*cos(190) }, {2.5+2.9*sin(190) }) (0) {$p_i$};
			\draw ({2.2*cos(190)},{2.5+2.2*sin(190)}) -- ({2.7*cos(190)},{2.5+2.7*sin(190)});
			\node[] at ({3*cos(40) }, {2.5+3*sin(40) }) (0) {$p_{i+1}$};
			\draw ({2.2*cos(40)},{2.5+2.2*sin(40)}) -- ({2.8*cos(40)},{2.5+2.8*sin(40)});
			\draw[->, densely dashed] ({2.5*cos(88)-0.2*cos(88)},{2.5-2.5*sin(88)+0.2*sin(88)}) arc [start angle = 272, delta angle = 126, radius=2.3cm] node[midway, left] {\small{1}};
			\draw[<-, densely dashed] ({2.5*cos(88)-0.2*cos(88)},{2.5-2.5*sin(88)-0.2*sin(88)}) arc [start angle = 272, delta angle = 126, radius=2.7cm] node[midway, right] {\small{2}};
			\draw[<-, densely dashed] ({2.5*cos(192)-0.2*cos(192)},{2.5+2.5*sin(192)-0.2*sin(192)}) arc [start angle = 192, delta angle = 76, radius=2.3cm] node[midway, right] {\small{3}};
			\draw[->, densely dashed] ({2.5*cos(192)+0.2*cos(192)},{2.5+2.5*sin(192)+0.2*sin(192)}) arc [start angle = 192, delta angle = 76, radius=2.7cm] node[midway, left] {\small{4}};
		\end{scope}
		\begin{scope}[xshift=300]

			\draw[] (0,0) arc (270:-90:2.5);  
			\draw[thick] ({2.5*cos(320)},{2.5 + 2.5*sin(320)}) arc [start angle=320, delta angle=120, radius=2.5cm] node[midway,right] {$s$};
			\node[] at (0,-0.5) (0) {$O$};
			\draw (0,0.25) -- (0,-0.25);
			\node[] at ({2.9*cos(80) }, {2.5+2.9*sin(80) }) (0) {};
			\draw ({2.2*cos(80)},{2.5+2.2*sin(80)}) -- ({2.8*cos(80)},{2.5+2.8*sin(80)});
			\node[] at ({3*cos(320) }, {2.5+3*sin(320) }) (0) {};
			\draw ({2.2*cos(320)},{2.5+2.2*sin(320)}) -- ({2.8*cos(320)},{2.5+2.8*sin(320)});
			
			\draw[->, densely dashed] ({2.5*cos(88)-0.2*cos(88)},{2.5-2.5*sin(88)+0.2*sin(88)}) arc [start angle = 272, delta angle = 46, radius=2.3cm] node[midway, above=-2] {\small{1}};
			\draw[->, densely dashed] ({2.5*cos(322)-0.2*cos(322)},{2.5+2.5*sin(322)-0.2*sin(322)}) arc [start angle = 322, delta angle = 116, radius=2.3cm] node[midway, left] {\small{2}};
			\draw[->, densely dashed] ({2.5*cos(82)+0.2*cos(82)},{2.5+2.5*sin(82)+0.2*sin(82)}) arc [start angle = 82, delta angle = 186, radius=2.7cm] node[midway, left] {\small{3}};
			\draw[->, densely dashed] ({2.5*cos(88)-0.2*cos(88)},{2.5-2.5*sin(88)+0.6*sin(88)}) arc [start angle = 272, delta angle = 46, radius=1.9cm] node[midway, above=0.2] {\small{4}};
			\draw[<-, densely dashed] ({2.5*cos(88)-0.2*cos(88)},{2.5-2.5*sin(88)-0.2*sin(88)}) arc [start angle = 272, delta angle = 46, radius=2.7cm] node[midway, below right] {\small{5}};
		\end{scope}
		\end{tikzpicture}
		\caption{The two main cases of Algorithm~\ref{algo:ring2}: On the left, there are two subsequent requests at positions $p_i, p_{i+1}$ with distance at least $1/3$ from each other, on the right, no such pair of requests exists.
		The movement of the algorithm is indicated by the arrows, where arrows on the inside of the ring show a movement without waiting and an arrow on the outside implies waiting at unreleased requests.	
		For the first case, the algorithm moves without waiting to position $p_{i+1}$ and then goes back to the origin, serving the requests and waiting for unreleased ones, if necessary. It then moves instantly to position $p_i$, serving the requests on the way back to the origin, again waiting if necessary.		
		For the second case, the algorithm waits until a segment $s$ of size at least $1/3$ (marked in bold) is fully released, then instantly goes to the closest endpoint of $s$, serving all released requests it encounters. Continuing in the same direction, it serves the requests in $s$ (which are all released). Afterwards, it continues in the same direction towards the origin, serving the requests it encounters, waiting if necessary. Then, it moves instantly to the furthest unserved request (which might be at the start of $s$), waits until it can be served and moves back to the origin, serving the unserved requests, waiting if necessary. }
		\label{fig:RingAlgo}
	\end{center}
\end{figure}
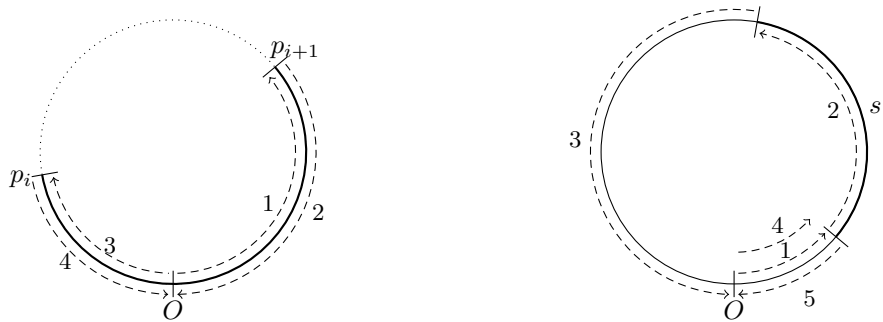%
For pseudocode, see Algorithm~\ref{algo:ring2}. The algorithm guarantees a competitive ratio of $5/3$, which we formalize in Theorem~\ref{theo:ringClosedCompetitiveRatio2}.

\begin{algorithm}[ht]
	\KwIn{Offline: Request locations $p_1, \dots, p_n$,  numbered clockwise from $O$\\
	\phantom{\textbf{Input:} }Online: Release times $t_1, \dots, t_n$}
	\eIf{there exists a segment  $s=[p_i,p_{i+1}]$ (with $1\leq i\leq n-1$) of size at least 1/3}{
	Go to the farmost request from $O$ between $p_i$ and $p_{i+1}$, say $p_{i+1}$.
	
	Move from $p_{i+1}$ to $O$ clockwise. Serve the requests on the way. If a request is unreleased, wait until it is released.
	
	Move from $O$ to $p_i$ (clockwise). Then go back to $O$, serving the requests on the way. If a request is unreleased, wait until it is released.}
	{
	Wait at the origin until $t^{(1)}$, when a segment $s \subseteq (0,1/2]$ or $s \subseteq [1/2,1)$ of length $|s| \ge 1/3$ is completely released, i.e., $\forall i \in [n]$ s.t. $p_i \in s$ it holds that $t_i \le t^{(1)}$. 
	If $s \subseteq (0,1/2]$, move clockwise, else, move counter clockwise, till the farmost (from $O$) extremity of $s$ is reached, serving requests along the way without waiting.\\
	Continue in the same direction back to the origin, serving requests along the way, waiting if necessary. Having reached $O$, continue in the same direction and stop at the unserved request which is furthest from the origin. Wait until it can be served, turn around and move back to the origin, on the way serving the remaining requests, waiting if necessary.
    }
    \caption{Algorithm for closed OLTSP-L on the ring}
	\label{algo:ring2}
\end{algorithm}

\begin{theorem}\label{theo:ringClosedCompetitiveRatio2}
	Algorithm~\ref{algo:ring2} has a competitive ratio of $5/3$ for closed OLTSP-L on the ring.
\end{theorem}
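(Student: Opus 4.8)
The plan is to compare $|\alg|$ with $|\opt|$ through two standing lower bounds on the optimum: since a full round trip (length $1$) is the shortest closed tour here, $|\opt|\ge 1$; and since any request $q$ is served no earlier than its release time $t_q$ and the server must afterwards return to $O$, $|\opt|\ge t_q+d(q,O)$ for every $q$. I would then analyse the two branches of Algorithm~\ref{algo:ring2} separately, the second being the heart of the argument.

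For the large-gap branch ($s=[p_i,p_{i+1}]$ with $|s|\ge 1/3$), the assumption that the round trip is optimal forces every gap, hence $|s|$, to be $\le 1/2$, so the two out-and-back trips the server performs (each staying on one side of the gap) have combined length $2(1-|s|)\le 4/3$. As the server starts moving at time $0$, I would bound its completion time by splitting on the last request $q$ at which it idles: if it never idles, $|\alg|=2(1-|s|)\le\tfrac43\le\tfrac43|\opt|$; otherwise $|\alg|\le t_q+(\text{length of the route remaining after }q)$, and I would bound that remaining length and combine with $|\opt|\ge t_q+d(q,O)$ and $|\opt|\ge 1$ to obtain $\tfrac53|\opt|$.

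In the second branch (no gap $\ge 1/3$) the server waits until $t^{(1)}$ and then, say going clockwise from $s=[a,b]\subseteq(0,1/2]$, performs one full loop (length $1$) followed by a back-and-forth of length $2x_m$ to the furthest request, at distance $x_m\le a\le b-1/3\le 1/6$ from $O$, that was still unreleased when first passed; so its travel after $t^{(1)}$ is at most $1+2x_m\le 4/3$. I would split on what determines $|\alg|$. If the finish is caused by idling at a request $q$ lying deep in the route — a loop request with $d(q,O)=1-p_q$, or any request reached in the final back-and-forth — then $|\alg|\le t_q+d(q,O)+2x_m\le|\opt|+\tfrac13\le\tfrac43|\opt|$. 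If it is caused by a loop request with position $p_q<1/2$ (hence $p_q>b>1/3$), then $|\alg|\le t_q+(1-p_q)+2x_m$ and, using $|\opt|\ge t_q+p_q$ together with $1-2p_q<\tfrac13$ and $2x_m\le\tfrac13$, we get $|\alg|\le|\opt|+\tfrac23\le\tfrac53|\opt|$. In the only remaining case the server never idles after $t^{(1)}$, so $|\alg|=t^{(1)}+1+2x_m\le t^{(1)}+\tfrac43$.

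To close this last case I need the key lemma $|\opt|\ge t^{(1)}+2/3$; granting it, combining with $|\opt|\ge 1$ gives $|\alg|\le\tfrac53|\opt|$ (the two bounds meet exactly at $t^{(1)}=1/3$). This lemma is where I expect the difficulty to lie. For $t^{(1)}\le 1/3$ it is immediate from $|\opt|\ge 1$. For $t^{(1)}>1/3$ I would exploit that, since no length-$1/3$ arc in either half is fully released before $t^{(1)}$, each half contains an unreleased request within distance $1/3$ of $O$, which is therefore served at time $\ge t^{(1)}$; whichever direction $\opt$ sets out in, it meets such a near-origin request after covering at most $1/3$ of the ring and must idle there until $t^{(1)}$, after which finishing the closed tour costs at least the remaining $2/3$. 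The delicate point is making this argument rigorous for arbitrary optimal trajectories rather than for clean clockwise/counter-clockwise round trips: I expect to argue by a case analysis on $\opt$'s initial direction and on its position at time $t^{(1)}$, using that the round trip is length-minimal so that any backtracking of $\opt$ only enlarges the distances that drive the $2/3$ bound.
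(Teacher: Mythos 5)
Your treatment of the waiting cases and of the large-gap branch is sound and essentially parallels the paper's, but the step you yourself flag as the crux --- the ``key lemma'' $|\opt|\ge t^{(1)}+2/3$ --- is not just delicate to make rigorous: it is false. Take five requests at positions $1/6,\,5/12,\,1/2,\,7/12,\,5/6$ (all consecutive gaps are less than $1/3$, and the full loop of length $1$ is the shortest tour), with the requests at $5/12,\,1/2,\,7/12$ released at time $0$ and those at $1/6$ and $5/6$ released at time $1$. Every segment of length $1/3$ inside $(0,1/2]$ contains the point $1/6$, and every one inside $[1/2,1)$ contains $5/6$, so $t^{(1)}=1$. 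Yet $\opt$ can sweep to $7/12$ and back so as to sit at $1/6$ at time $1$ having served the three early requests, then serve $1/6$, cross through $O$ to $5/6$, and return, finishing at $|\opt|=3/2<t^{(1)}+2/3=5/3$. The intuition behind your lemma breaks exactly where you suspected: $\opt$ need not idle at the near-origin unreleased request it first meets; it can pass it, harvest the rest of the ring, and time its return, and the two surviving unreleased requests can both sit at distance exactly $1/6$ from $O$ on opposite sides, so that serving both after $t^{(1)}$ and returning costs only $1/2$, not $2/3$.

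Because of this, your final case (no idling after $t^{(1)}$) does not close: you need $t^{(1)}+4/3\le \tfrac53|\opt|$, and the bounds $|\opt|\ge 1$ and $|\opt|\ge t^{(1)}+1/2$ (the latter \emph{is} provable by the argument above, using unreleased requests in $[1/6,1/2]$ and $[1/2,5/6]$) are not sufficient --- e.g.\ at $t^{(1)}=1/2$ they only give $\tfrac53|\opt|\ge\tfrac53<\tfrac{11}{6}$. The paper resolves this by proving the weaker disjunction that \emph{either} $|\opt|\ge t^{(1)}+2/3$ \emph{or} $|\opt|\ge 4/3$, via a case analysis on the position $F$ of $\opt$ at time $t^{(1)}$ and on whether its subsequent trajectory recrosses the antipode $C$ of $O$; either disjunct, combined respectively with $|\alg|\le t^{(1)}+4/3$ and with $t^{(1)}\le|\opt|-1/2$, yields the ratio $5/3$. (In the counterexample above the second disjunct is the one that applies.) So your proposal needs to replace the unconditional lemma by this disjunction, and the accompanying case analysis is the genuine content of the proof that is currently missing.
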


\begin{proof}
In both cases (whether there exists $[p_i,p_{i+1}]$ of size at least 1/3 or not), the algorithm goes to the farthest extremity of a free or completely released segment $s$ with extremities $P$ and $Q$, say point $P$, and then continues from this point to serve the (remaining) requests. Note that as the segment has size at least 1/3, $Q$ is at distance at most 1/3 from $O$. We first consider the case where $\alg$ has to wait at some time after reaching $P$. Let $t^{(0)}$ denote the last time that $\alg$ has to wait at an unreleased request $p_{j^*}$ and let $x$  be the distance between $O$ and $p_{j^*}$. Then, $|\opt| \ge t^{(0)} +x$. We distinguish two cases:
    \begin{itemize}
        \item If $x> 1/3$, and $Q$ and $p_{j^*}$ are on the same side of the ring (both in $[0,1/2]$ or both in $[1/2,1]$):  then $Q$ is at distance at most $1/6$ from $O$, and we have $|\alg| \le t^{(0)} + 1-x + 2 \cdot 1/6\leq t^{(0)}+1$, and $|\opt| \ge t^{(0)} +1/3$, so   $|\alg|\leq |\opt|+2/3\leq 5|\opt|/3$.
        \item Otherwise, as $Q$ is at distance at most $1/3$ from $O$, we have $|\alg| \le  t^{(0)}+x+2 \cdot 1/3$, and again  $|\alg|\leq |\opt|+2/3\leq 5|\opt|/3$.
    \end{itemize}

In the rest of the proof, we consider that $\alg$ does not wait after reaching the farmost extremity of the free/fully released segment.

    We first consider the first case, when there are two consecutive requests $p_i$, $p_{i+1}$ at distance at least 1/3. Let us denote by $x_j$ the distance from $p_j$ to $O$. Note that if $p_{i+1}$ is farther from $O$ than $p_i$ (the other case being symmetrical), we have $x_i\leq 1/3$.
    As $\alg$ never waits for an unreleased request, we have $|\alg|\leq 1+2x_i\leq 5/3 \leq 5|\opt|/3$. 

Now we deal with the second case, i.e., there are no consecutive requests with a distance at least 1/3.

We first note that in this case any solution that does {\it not} go through all the points of the circle (i.e., does not make the whole tour) has cost at least $2 \cdot 2/3=4/3$.


As $\alg$ does not wait for an unreleased request, we have 
\begin{equation}\label{eq:ring}
|\alg| \leq t^{(1)} + 1 + 2\cdot1/6 = t^{(1)} + 4/3\enspace.
\end{equation}
Indeed, after finishing a first tour the server may have to go to the closest extremity of $s$ and back to $O$, but as $|s|\geq 1/3$  the closest extremity of $s$ to $O$ is at distance at most $1/6$ from $O$.

Consider Figure~\ref{fig:circle}, and let us now look at $\opt$. As previously said, $|\opt|\geq 1$. Also, at $t^{(1)}-\epsilon$ there must be an unreleased request in $[A,C]$, and one in $[C,E]$. To serve both requests and go back to $O$, $\opt$ needs at least $1/2$ after $t^{(1)}$, so $|\opt|\geq t^{(1)}+1/2.$

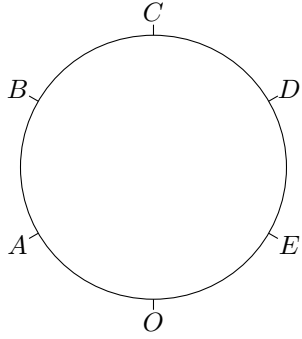
\begin{figure}[ht]
\begin{center}
\begin{tikzpicture}[scale=0.7]
	\draw (0,0) arc (270:-90:2.5);  
	\node[] at (0,-0.425) (0) {$O$};
    \draw (0,0) -- (0,-0.2);
	
	\node[] at ({2.95*cos(210) }, {2.5+2.95*sin(210) }) (0) {$A$};
 \draw ({2.5*cos(210)},{2.5+2.5*sin(210)}) -- ({2.7*cos(210)},{2.5+2.7*sin(210)});
\node[] at ({2.95*cos(150) }, {2.5+2.95*sin(150) }) (0) {$B$};
\draw ({2.5*cos(150)},{2.5+2.5*sin(150)}) -- ({2.7*cos(150)},{2.5+2.7*sin(150)});
\node[] at ({2.95*cos(90) }, {2.5+2.95*sin(90) }) (0) {$C$};
\draw ({2.5*cos(90)},{2.5+2.5*sin(90)}) -- ({2.7*cos(90)},{2.5+2.7*sin(90)});
\node[] at ({2.95*cos(30) }, {2.5+2.95*sin(30) }) (0) {$D$};
\draw ({2.5*cos(30)},{2.5+2.5*sin(30)}) -- ({2.7*cos(30)},{2.5+2.7*sin(30)});
\node[] at ({2.95*cos(-30) }, {2.5+2.95*sin(-30) }) (0) {$E$};
\draw ({2.5*cos(-30)},{2.5+2.5*sin(-30)}) -- ({2.7*cos(-30)},{2.5+2.7*sin(-30)});
\end{tikzpicture}
   \caption{Ring with points at clockwise distance $i/6$ from $O$, $i=1,\dots,5$}
    \label{fig:circle}
\end{center}
\end{figure}

We distinguish several cases in the following and show that in all cases either $|\opt|\geq t^{(1)}+2/3$, or $|\opt|\geq 4/3$. Note that then the ratio 5/3 follows: if $|\opt|\geq t^{(1)}+2/3$ then $|\alg|\leq t^{(1)}+4/3\leq |\opt|+2/3\leq 5|\opt|/3$, and if $|\opt|\geq 4/3$ then $|\alg|\leq t^{(1)}+4/3\leq |\opt|+(4/3-1/2)= |\opt|+5/6\leq 5|\opt|/3$ as $5/6\leq 2/3 \cdot 4/3\leq 2|\opt|/3$.

Let $F$ be the position of $\opt$ at time $t^{(1)}$, and $x_F$ its distance from $O$. We suppose, without loss of generality, that $F$ is in the clockwise first half of the ring (left part in Figure~\ref{fig:circle}). 

\begin{itemize}
    \item 
Suppose that $F$ is in $[OB]$. 
\begin{itemize}
    \item If $\opt$ goes through $C$ after $t^{(1)}$, then $|\opt|\geq t^{(1)}+1/6+1/2= t^{(1)}+2/3$.
    \item If $\opt$ goes through $C$ before $t^{(1)}$, and never again after $t^{(1)}$, then it is in $F$ at the earliest at $1-x_F$. Since there is an unreleased request (at $t^{(1)}-\epsilon$) in $[C,E]$, from $F$ to reach this request (without going through $C$) and to go back to $O$, $\opt$ needs at least $x_F+2 \cdot 1/6$, so $|\opt|\geq 1+1/3=4/3$.
    \item If $\opt$ never goes through $C$ then $|\opt|\geq 4/3$.
\end{itemize}
\item Suppose now that $F$  is in $[BC]$.
\begin{itemize}
\item Suppose that $\opt$ goes through $D$ after $t^{(1)}$.  
\begin{itemize}
    \item If $\opt$ goes back into $[OB]$ before $D$, then $|\opt|\geq t^{(1)}+2/3$ (1/3 from $B$ to $D$, 1/3 from $D$ to $O$).
    \item If $\opt$ goes to some point on $[AC]$ after going through $D$, then $|\opt|\geq t^{(1)}+2/3$ (1/2 from $C$ to $O$ and 1/3>1/6 from $O$ to $A$ and back to $O$ -- or going from $F$ to $D$, which takes at least $1/6$ and back to $O$ through $[AC]$, requiring again at least $2/3$).
    \item Otherwise: there cannot be an unreleased request in $[AB]$ (as $\opt$ has to go there, and we are in one of the two above cases). Let $x$ be the position of the farthest request in $[OB]$ unreleased at $t^{(1)}-\epsilon$. This request is in $[OA]$. Then there is an unreleased request at position $y$ in $[x,x+1/3]$. This second request must be in $[BC]$. Then, $\opt$ has to go to position $y$, then to $D$, then to $x$, and back to $O$. We get $|\opt|\geq t^{(1)}+(1-y)+2x$. As $x\geq y-1/3$ and $y\geq 1/3$, we have $|\opt|\geq t^{(1)}+1/3+y\geq t^{(1)}+2/3$.  
\end{itemize}  
\item Suppose finally that $\opt$ never goes through $D$ after $t^{(1)}$. If there is some unreleased request in $[DE]$,  then it needs at least 1/3 to reach $O$, and $2 \cdot 1/6$ to serve the unreleased request on $[DE]$, so $|\opt|\geq t^*+2/3$. Otherwise, there are two unreleased request, one in $[EO]$, one in $[CD]$, at distance at most 1/3. $\opt$ must serve the one in $[CD]$, then take the long road to serve the one in $[EO]$, and this long road has length at least 2/3. So $|\opt|\geq t^{(1)}+2/3$.
\end{itemize}
\end{itemize}




\end{proof}

\section{Stars} \label{Stars}

We consider in this section the case of stars. There is a star, centered at $O$ (the initial position), with $k$ branches or rays. If two requests $q_i$ and $q_j$, located at $p_i$ and $p_j$, are not in the same branch, then the distance between them is $d(p_i,p_j)=d(p_i,O)+d(O,p_j)$. 
We first show a lower bound of $2$ when the locations of the requests are unknown (even if their number is known), in both the closed and the open version. Formally, we prove the following (the proof can be found in Appendix~\ref{app:KnownNumberStar}).

\begin{proposition} \label{prop:KnownNumberStar}
When the locations are unknown, for any $\epsilon>0$, there is no $(2-\epsilon)$-competitive algorithm for both closed and open OLTSP on the star, even if the number of requests is known.
\end{proposition}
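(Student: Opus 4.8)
The plan is to treat the two variants separately, since the open case admits a strikingly simple instance while the closed case requires an adaptive schedule.

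For the \emph{open} case I would use a single request ($n=1$) on a star with (at least) two rays $b_1,b_2$ of length $1$. The adversary waits until time $t=1$ and then inspects the server's position: since the server knows neither the ray nor the distance of the unique request, whatever it does it occupies a single point at distance $r\le 1$ from $O$ on some ray (its best hedge being to sit at $O$). The adversary then releases the request at the endpoint of a ray the server is \emph{not} on, which lies at distance $r+1\ge 1$ from the server; hence the server cannot serve it before time $1+(r+1)\ge 2$, so $|\alg|\ge 2$. Meanwhile $\opt$, knowing the location from the start, walks straight to that endpoint and arrives exactly at the release time $t=1$, so $|\opt|=1$. This gives $|\alg|/|\opt|\ge 2$, proving that no $(2-\epsilon)$-competitive algorithm exists for open OLTSP on the star. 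The feature exploited is that on a star the origin is the only ``central'' hedge, yet it is still at distance $1$ from every endpoint, so directional uncertainty costs the server a full unit that $\opt$ spends productively before the release.

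For the \emph{closed} case the same single-request idea only yields $3/2$, because the forced return trip to $O$ is paid by $\opt$ as well and dilutes the ratio. I would therefore adapt to the star the adaptive construction underlying the closed lower bound of $2$ for general metrics / the ring, in the spirit of Proposition~\ref{prop:KnownNumberClosedRing}: the adversary first releases a request that forces the server to commit to and travel along one ray, and then, at the moment the server is committed and far from $O$, releases a further request on a different ray, so that the server must cross back through $O$ and traverse another ray before homing, essentially performing a second sweep. The offline optimum, knowing both requests in advance, prepositions so as to meet each request at its release time and never backtracks, finishing in about half the time. Having $\ge 3$ rays is what makes this possible on a tree: unlike the line, where the server can profitably wait \emph{between} two potential request sites, on a star it can only hedge at $O$, and any commitment toward one ray leaves it maximally mispositioned for the others.

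The main obstacle is precisely the closed case, and specifically the quantitative tuning of the release times and target locations so that the ratio reaches $2$ rather than some weaker constant: one must ensure that the detour forced on the server is \emph{not} charged to $\opt$ as well, which is the whole difficulty (a naive ``lure to an endpoint, then strike'' only gives a ratio like $5/4$, since $\opt$ must also visit the luring endpoint, and a far request released late penalizes $\opt$ through waiting exactly as much as it penalizes the server). Resolving this requires a case analysis over the server's position and behaviour at each release time — covering the possibilities that it lingers on the committed ray, returns toward $O$, or tries to straddle — and choosing the adversary's response in each case so that $\opt$ can always absorb the extra request within a prepositioned, non-backtracking tour while the server cannot. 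This case analysis, rather than any single clever instance, is where the real work lies.
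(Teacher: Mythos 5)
Your open-case argument is correct and self-contained: a single request of known multiplicity $n=1$ on a two-ray star, released at $t=1$ at the endpoint of a ray the server does not occupy, gives $|\alg|\ge 2$ versus $|\opt|=1$. This is essentially the classical one-request argument (the paper itself notes that the line lower bound of~\cite{AusielloFLST01} transfers to the star for the open case), and it is simpler than what the paper does, since the paper runs a single unified construction for both variants.

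The closed case, however, is a genuine gap: you have correctly diagnosed that the difficulty lies there, but the mechanism you sketch cannot be pushed to ratio $2$. Releasing ``a further request on a different ray'' makes $\opt$ pay for that ray as well, and with any \emph{bounded} number of requests the ratio stays bounded away from $2$; the number of requests must grow like $1/\epsilon$. The paper's construction uses a different and essential trick: take $k=\lfloor n/2\rfloor$ branches with endpoints $a_1,\dots,a_k$ at distance $1$, release one request at every endpoint at $t=1$, and then, \emph{each time $\alg$ serves a request at some $a_i$ at time $t$, release a fresh request at the same point $a_i$ at time $t+2$}, continuing until $t^*=2k-1$ (dumping any unused requests at $O$). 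Since $\alg$ needs at least $2$ time units between consecutive serves, at time $t^*$ every branch still carries an unserved request, forcing $|\alg|\ge t^*+2(k-1)=4k-3$; meanwhile, ordering the branches by the last time $\alg$ visited them, one shows the final release on the $i$-th branch occurs by time $2i+1$, so $\opt$ can sweep the branches in that order, meeting each last release exactly, and close the tour by $2k+2$. The ratio $(4k-3)/(2k+2)\to 2$. It is this ``re-release at the just-served location'' adaptivity, combined with the scheduling argument that lets $\opt$ visit each branch exactly once, that your proposal is missing; without it, the case analysis you defer to will not converge to $2$.
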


We then focus on the closed case and show that this bound of 2 can be improved on stars when the locations of requests are known, with a polynomial time algorithm. More precisely, we devise a polytime algorithm which is $(7/4+\epsilon)$-competitive (in the closed case on stars). 

The idea of the algorithm is the following. If a long ray (i.e., ray of length at least $1/4$ of the overall length of all rays)  exists, it already gives a good competitive ratio to first serve this ray completely before going over to the remaining rays. Otherwise, if all rays are short, the algorithm waits in the origin until time $t$, which is exactly the combined length of all rays. At this time, it identifies a set $R$ of rays maximizing the released segments in $R$ under the constraint that the set $R$ can be traversed completely, including going back to the origin, in time $t$. For this, only contiguous segments starting at the outer extremities of the rays are counted. If a segment does not start at the outer extremity of a ray, it will have to be traversed to reach the extremity, so serving it early does not help. Having identified the set $R$, our algorithm then serves the released requests in the set $R$ and waits at the origin until all requests are released. Afterwards, it serves the unserved requests in an optimal manner. A visualization of the algorithm is given in Figure~\ref{fig:starExample}. Note that finding an optimal set $R$ constitutes solving a knapsack problem. Thus, this might not be possible in polynomial time. For a simpler exposition, we first show that with an optimal $R$, the algorithm achieves a competitive ratio of $7/4$ in Theorem~\ref{theo:starClosedCompetitiveRatio}. Utilizing a knapsack FPTAS to find $R$ gives us, for every constant $\epsilon > 0$, a $7/4+\epsilon$-competitive polynomial time algorithm (see Corollary~\ref{cor:StarClosedCompetitiveRatioPolyTime}).

\begin{algorithm}[ht]
	\KwIn{Offline: Request locations $p_1, \dots, p_n$\\
	\phantom{\textbf{Input:} }Online: Release times $t_1, \dots, t_n$}
	Let $R_1,\dots,R_k$ be the rays of the star and $r_1, \dots, r_k$ the lengths of the rays, i.e., the distance from the origin to the respective extremity of the ray. \\
	\uIf{there exists $r_j, j \in [k]$, with $r_j \ge 1/4 \sum_{j'=1}^{k} r_{j'}$}{Instantly traverse ray $R_j$, turning once reaching the extremity and start to serve the requests. If a request is unreleased, wait at this location until it is released before continuing.}
	\Else{ Wait in the origin until $t = \sum_{j'=1}^k r_{j'}$.\\
		Let $R$ be the set of indices $R \subseteq [k]$ maximizing the combined value of contiguously released segments, starting from the outer extremities of their respective rays, s.t.\  $\sum_{j' \in R} r_{j'} \le 1/2 \sum_{j' = 1}^{k} r_{j'}$.\\
		Traverse the rays of set $R$, serving the released requests and return to $O$.}
	Afterwards, wait in the origin until all requests are released and serve the unserved requests in an optimal manner.
	\caption{Algorithm for closed OLTSP-L on the star}\label{algo:starClosedWaitFirst}
\end{algorithm}

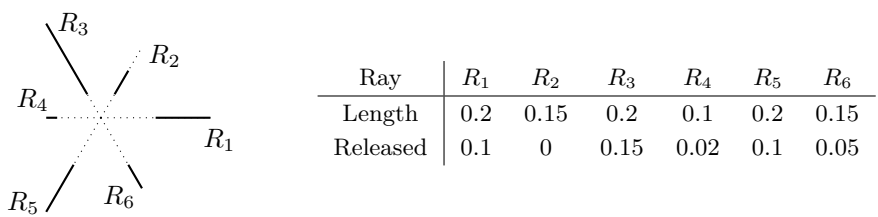
\begin{figure}[h]
	\centering
\begin{minipage}{0.3\textwidth}

\begin{tikzpicture}[scale=1.2]
	\draw[dotted] (0,0) -- (1.2,0);
	\draw[dotted] (0,0) -- ({0.9*cos(60)},{0.9*sin(60)});
	\draw[dotted] (0,0) -- ({1.2*cos(120)},{1.2*sin(120)});
	\draw[dotted] (0,0) -- ({0.6*cos(180)},{0.6*sin(180)});
	\draw[dotted] (0,0) -- ({1.2*cos(240)},{1.2*sin(240)});
	\draw[dotted] (0,0) -- ({0.9*cos(300)},{0.9*sin(300)});
	\node at (1.3,-0.25) (R1) {$R_1$};
	\node at ({0.9*cos(60)+0.25},{0.9*sin(60)-0.1}) (R2) {$R_2$};
	\node at ({1.2*cos(120)+0.3},{1.2*sin(120)}) (R3) {$R_3$};
	\node at ({0.6*cos(180)-0.15},{0.6*sin(180)+0.2}) (R4) {$R_4$};
	\node at ({1.2*cos(240)-0.25},{1.2*sin(240)+0.1}) (R5) {$R_5$};
	\node at ({0.9*cos(300)-0.25},{0.9*sin(300)-0.1}) (R6) {$R_6$};
	
	\draw[thick] (0.6,0) -- (1.2,0);
	\draw[thick] ({0.3*cos(60)},{0.3*sin(60)}) -- ({0.6*cos(60)},{0.6*sin(60)});
	\draw[thick] ({0.3*cos(120)},{0.3*sin(120)}) -- ({1.2*cos(120)},{1.2*sin(120)});
	\draw[thick] ({0.48*cos(180)},{0.48*sin(180)}) -- ({0.6*cos(180)},{0.6*sin(180)});
	\draw[thick] ({0.6*cos(240)},{0.6*sin(240)}) -- ({1.2*cos(240)},{1.2*sin(240)});
	\draw[thick] ({0.6*cos(300)},{0.6*sin(300)}) -- ({0.9*cos(300)},{0.9*sin(300)});
	
\end{tikzpicture}

\end{minipage}%
\begin{minipage}{.6\textwidth}
\begin{tabular}{c|cccccc}
	Ray & $R_1$ & $R_2$ & $R_3$ & $R_4$ & $R_5$ & $R_6$ \\
	\hline 
	Length & 0.2 & 0.15 & 0.2 & 0.1 & 0.2 & 0.15 \\
	Released & 0.1 & 0 & 0.15  & 0.02  & 0.1  & 0.05
\end{tabular}
\end{minipage}
\caption{There are 6 rays in the star. The length of each ray indicates the farthest request on it. Released segments are visualized using bold lines. The lengths of the rays and the released segments are given in the table. Although there is a segment which is released on $R_2$, the released length is 0, as this segment does not start at the outer extremity of $R_2$. For this situation, $R$ would consist of the rays $R_1, R_3$ and $R_4$ (or $R_3, R_4, R_5$) with a combined length of $1/2$ and $\ell = 0.27$, where $\ell$ is the combined length of released segments in $R$.}\label{fig:starExample}
\end{figure}

\begin{theorem}\label{theo:starClosedCompetitiveRatio}
	Algorithm~\ref{algo:starClosedWaitFirst} achieves a competitive ratio of $7/4$ for closed OLTSP-L on the star.
\end{theorem}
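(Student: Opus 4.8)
The plan is to normalize by setting $L=\sum_{j=1}^{k}r_j$ and to rely on two offline lower bounds throughout. First, any closed tour must reach the farthest request on every ray and return to $O$, so $|\opt|\ge 2L$. Second, no request can be served before it is released, so $|\opt|\ge t_{\max}$ where $t_{\max}$ is the last release time; more precisely, a request at depth $d$ on its ray released at time $t$ forces $|\opt|\ge t+d$, since it is served no earlier than $t$ and the tour must still come back to $O$. In both branches of Algorithm~\ref{algo:starClosedWaitFirst} the strategy is to write $|\alg|$ as (the time at which the final clean-up phase starts) plus (the length of an out-and-back schedule serving the still-unserved requests), bound the first summand by $|\opt|$, and bound the second by $\tfrac34|\opt|$.

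First I would treat the branch where a long ray $R_j$ with $r_j\ge L/4$ exists. There the server serves $R_j$ first (waiting at unreleased requests), returns to $O$, waits until everything is released, and finishes the remaining rays, which costs at most $2(L-r_j)\le\tfrac32 L$. If the server waited on $R_j$, let its last wait be at depth $x$ at release time $t'$; it then reaches $O$ at $t'+x$, and the per-request bound gives $t'+x\le|\opt|$, so the clean-up starts at $\max(t'+x,t_{\max})\le|\opt|$ (and at $\max(2r_j,t_{\max})\le|\opt|$ if there was no wait). Since $\tfrac32 L\le\tfrac34\cdot 2L\le\tfrac34|\opt|$, this gives $|\alg|\le|\opt|+\tfrac34|\opt|=\tfrac74|\opt|$.

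The substantial branch is when all rays are short, $r_j<L/4$. Here the server waits until $t=L$, performs the half-tour on the optimal set $R$ (total ray-length $\le L/2$, maximizing the released outer-segment length $\ell$), then waits for all releases and finishes optimally. The key cost estimate is that after the half-tour the deepest unserved request on a ray $j\in R$ lies at depth at most $r_j-a_j$ (its released outer part, of length $a_j$, is already served), while rays outside $R$ are untouched; hence the final out-and-back schedule has length at most $2\sum_{j\notin R}r_j+2\sum_{j\in R}(r_j-a_j)=2L-2\ell$. Moreover the half-tour ends by $L+2\cdot\tfrac{L}{2}=2L$, so the clean-up begins at $\max(2L,t_{\max})\le|\opt|$, giving $|\alg|\le|\opt|+(2L-2\ell)$. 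When $\ell\ge L/4$ we have $2L-2\ell\le\tfrac32 L\le\tfrac34|\opt|$ and are done; the entire difficulty is concentrated in the regime $\ell<L/4$, where the goal becomes to prove $|\opt|\ge\tfrac83(L-\ell)$.

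The hard part will be exactly this inequality. The case $t_{\max}\ge\tfrac83(L-\ell)$ is immediate from $|\opt|\ge t_{\max}$, so the real work is when releases are not too late. The intuition I would formalize is that $\ell<L/4$ forces most outer segments to be released only after time $L$: writing $a_j$ for the outer length released by time $L$, $u_j=r_j-a_j$ for the depth of the outermost still-unreleased request, and $A=\sum_j a_j$, the total unreleased depth $\sum_j u_j=L-A$ is large, and every such depth must be visited by $\opt$ strictly after time $L$. This yields a second lower bound of the form $|\opt|\ge L+\bigl(2(L-A)-r_{\max}\bigr)$ from the forced post-$L$ travel, which must then be combined with $|\opt|\ge 2L$ and with a relation between the knapsack value $\ell$ and the total released length $A$ (using $r_j<L/4$, so the budget $L/2$ can always be filled past $L/4$ by the densest rays, whence $A<4\ell$). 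The two genuinely delicate points, and where essentially all the effort goes, are: (i) $\opt$ may prefer to \emph{wait} at a deep late request rather than \emph{revisit} it, so the post-$L$ bound has to be argued against both strategies; and (ii) integrality of the knapsack blocks the naive ``half the budget buys half the value'' estimate (five equal rays already give only $\ell\ge\tfrac25 A$), so the combination of the deep-late-request bound with $|\opt|\ge 2L$ must be carried out through a careful further case split, e.g.\ on the position of $\opt$ at time $L$ and on the magnitude of $t_{\max}$, rather than by a single substitution.
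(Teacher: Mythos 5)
Your first branch (a ray of length at least $L/4$) and the subcase $\ell\ge L/4$ of the second branch match the paper's argument. The gap is in the core case $\ell<L/4$: you state the target $|\opt|\ge\tfrac83(L-\ell)$ but only sketch a route to it and explicitly defer the ``careful further case split'' that would close it. Two concrete problems with the sketch. First, the identity $\sum_j u_j=L-A$ is false: the outermost \emph{unreleased} request on a ray need not sit at depth $r_j-a_j$ (a ray whose only unreleased request lies very close to $O$ has $a_j=0$ yet $u_j$ tiny), so the post-$L$ travel bound $|\opt|\ge L+2(L-A)-r_{\max}$ does not follow. Second, even granting $A\le 4\ell$ and that bound, combining it with $|\opt|\ge 2L$ yields only $|\opt|\ge\max\{2L,\,11L/4-8\ell\}$, which falls short of $\tfrac83(L-\ell)$ for every $\ell\in(L/64,L/4)$; the loss comes precisely from routing the argument through the total released mass $A$ instead of through what $\opt$ has actually served.

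The paper closes this case with a sharper observation that your plan is missing: the knapsack optimality of $R$ bounds the \emph{progress} of $\opt$, not the released mass. At time $L$ the rays that $\opt$ has fully completed have total length at most $L/2$ (each is traversed out and back within time $L$) and are fully released, so they form a feasible candidate for $R$; hence their total length is at most $\ell$. Likewise the outer segment of length $\ell'$ already served on the ray $R_j$ that $\opt$ currently occupies satisfies $\ell'\le\ell$, since the singleton $\{R_j\}$ is feasible ($r_j<L/4$). This gives $|\opt|\ge L+(r_j-\ell')+2(L-r_j-\ell)\ge 11L/4-3\ell$, and $\frac{2L-2\ell}{11L/4-3\ell}\le\frac34$ for $0\le\ell\le L/4$, which finishes the proof without any analysis of releases occurring after time $L$. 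Replacing your unreleased-mass argument with this served-mass argument is what is needed to make the proof go through.
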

\begin{proof}
	We assume without loss of generality that $\sum_{j=1}^k r_j = 1$. This means that $|\opt| \ge 2$, since every ray has to be fully traversed and the server has to come back to the origin. Further, for the analysis we assume w.l.o.g.\ that requests are served on the way back to the origin, and that a request is served only if all requests which are farther from the origin on the same ray have already been served.
	
	For the first case, assume that there is a ray $r$ of length at least $1/4$. Then, the algorithm instantly goes to the far end of the ray and serves its requests on the way back, waiting at each until it is released. Having returned to the origin at time $t^*$, the algorithm can serve the remaining requests in time $3/2$ once everything is released. It further holds that $|\opt| \ge t^*$, i.e., $\opt$ cannot serve all requests and return to the origin before $\alg$ has served the ray $r$ and returned to the origin. Since $\alg$ starts serving the remaining requests when all requests are released and $\opt$ cannot finish before that time, we get $|\alg| \le |\opt| + 3/2$. This means $\ratio \le 1+3/4 = 7/4$.
	
	Otherwise, if all rays have a length of less than $1/4$, the algorithm waits until time $t=1$. At this time, it identifies the set $R$ and serves the requests in $R$. Since the combined length of rays in $R$ is at most $1/2$, $\alg$ will be back at the origin by time $t=2$. Denote the combined length of the released segments in the rays with index in $R$ by $\ell$. $\alg$ then serves the requests and returns to the origin, waiting there until all remaining requests are released, which is at most $t = |\opt|$. For the remaining requests, $\alg$ needs time at most $2-2\ell$, so we get $|\alg| \le |\opt| + 2-2\ell$. 
	Clearly, if $\ell \ge 1/4$, this means that $|\alg| \le |\opt| + 3/2 \le 7/4\cdot |\opt|$ as $|\opt| \ge 2$.
	
	Otherwise, if $\ell < 1/4$, we first observe that even though at most $\ell$ can be covered by time $t=1$, this does not mean that $\opt$ can have served only an $\ell$-fraction of the requests. It could be that $\opt$ is already on some ray $R_j$, having visited some requests, but not able to return to the origin by time $t=1$. However, by the definition of $R$, $\opt$ can have visited at most $\ell$ before traversing $R_j$ (the ray on which it is at $t=1$). Denote by $\ell'$ the length of the segment of requests visited by $\opt$ on $R_j$ at time 1. It must hold that $\ell' \le \ell$. Thus, $\opt$ must serve $r_j - \ell'$ on $R_j$ and at least $1-r_j - \ell$ on the remaining rays. Since $r_j \le 1/4$, this implies 
	\[
	|\opt| \ge 1 + r_j - \ell' + 2(1-r_j - \ell) \ge 3- r_j - 3\ell \ge 11/4 - 3\ell \enspace. \] 
	Since $|\alg| \le |\opt| + 2 - 2 \ell$, we have $\ratio \le 1 + \frac{2-2\ell}{11/4-3\ell} \le 1+ \frac{3}{4}$ as $ \frac{2-2\ell}{11/4-3\ell} \le \frac{3}{4}$ for $0 \le \ell \le 1/4$.
	This concludes the proof.

\end{proof}

Note that for a constant number $k$ of rays in the star, the algorithm is able to find the set $R$ in polynomial time and is thus able to run in polynomial time. Otherwise, we can use a knapsack FPTAS to approximately solve for the best set $R$ in polynomial time.

\begin{corollary}\label{cor:StarClosedCompetitiveRatioPolyTime}
    Using a knapsack-FPTAS to find set $R$ in Algorithm~\ref{algo:starClosedWaitFirst}, the algorithm achieves a competitive ratio of $7/4 + \epsilon$ for any constant $\epsilon > 0$ for closed OLTSP-L on the star in polynomial time.
\end{corollary}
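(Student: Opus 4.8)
The plan is to reuse the analysis of Theorem~\ref{theo:starClosedCompetitiveRatio} almost verbatim, quantifying the single place where replacing the optimal set $R$ by an approximate one matters. First I would observe that computing $R$ in the second branch of Algorithm~\ref{algo:starClosedWaitFirst} is exactly a $0/1$ knapsack instance: the items are the rays, ray $R_j$ has weight $r_j$ (its length) and value equal to the length of its contiguously released segment measured from the outer extremity, and the capacity is $\frac{1}{2}\sum_{j'} r_{j'}=1/2$. A knapsack FPTAS returns, for any chosen $\delta>0$, a feasible set whose combined released length $\ell_A$ satisfies $\ell_A \ge (1-\delta)\ell^*$, where $\ell^*$ is the optimal released length; its running time is polynomial in the number of rays and in $1/\delta$, hence polynomial for any constant $\delta$.

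The first case of the proof (a ray of length at least $1/4$ exists) never uses $R$, so the bound $\ratio \le 7/4$ is unaffected. For the second case, the upper bound on the algorithm only weakens to $|\alg| \le |\opt| + 2 - 2\ell_A$, whereas the lower bounds on $|\opt|$ must still be expressed in terms of the \emph{true} optimum $\ell^*$, since $\opt$ is unconstrained and may effectively realize the best feasible set of rays before time $t=1$. Keeping these two quantities separate is the crux of the argument. Using $\ell_A \ge (1-\delta)\ell^*$ together with $\ell^* \le 1/2$, I would convert the multiplicative guarantee into an additive loss: $2-2\ell_A \le (2-2\ell^*) + 2\delta\ell^* \le (2-2\ell^*) + \delta$.

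I would then rerun the two subcases of Theorem~\ref{theo:starClosedCompetitiveRatio} with this adjusted inequality. If $\ell^* \ge 1/4$, then $2-2\ell^* \le 3/2$, so $|\alg| \le |\opt| + 3/2 + \delta$; since $|\opt|\ge 2$ this yields $\ratio \le 7/4 + \delta/2$. If $\ell^* < 1/4$, the same reasoning as before gives $|\opt| \ge 11/4 - 3\ell^*$, and since this denominator exceeds $2$ the extra additive $\delta$ contributes at most $\delta/2$ to the ratio; combined with the already-established bound $\frac{2-2\ell^*}{11/4-3\ell^*}\le 3/4$ valid for $0\le \ell^* \le 1/4$, we again obtain $\ratio \le 7/4 + \delta/2$. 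Choosing $\delta = 2\epsilon$ produces a competitive ratio of $7/4+\epsilon$.

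The only genuine obstacle is precisely this separation of the approximate value $\ell_A$ attained by $\alg$ from the exact value $\ell^*$ used to lower-bound $\opt$; once the FPTAS guarantee is turned into an additive $O(\delta)$ term (using $\ell^* \le 1/2$ and $|\opt| \ge 2$), the whole statement reduces to the inequalities already proved in Theorem~\ref{theo:starClosedCompetitiveRatio}. Polynomiality for constant $\epsilon$ then follows directly from the FPTAS running time.
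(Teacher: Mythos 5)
Your proposal is correct and follows essentially the same route as the paper: cast the choice of $R$ as a knapsack instance, use the FPTAS guarantee $\ell_A\ge(1-\delta)\ell^*$ to turn the loss into an additive term $2\delta\ell^*$ in the bound $|\alg|\le|\opt|+2-2\ell_A$, and absorb it into an additive $O(\delta)$ increase of the ratio via $|\opt|\ge 2$. The paper's proof is terser (it bounds the extra term by $2\ell\epsilon$ with $\ell\le 1$ and does not explicitly rerun the subcases), whereas you carefully keep $\ell_A$ and $\ell^*$ separate and re-verify both branches of Theorem~\ref{theo:starClosedCompetitiveRatio}; this is the right point to be careful about, but it is the same argument.
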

\begin{proof}
    The set $R$ returned by the knapsack FPTAS (see, e.g., \cite[Section 6.2]{KellererPP04}) guarantees a total released length of $(1-\epsilon)\ell$ for any constant $\epsilon > 0$, where $\ell$ is the optimal released length. Thus, we get $|\alg| \le |\opt| + 2-2\ell + 2\ell\epsilon$, which increases $\ratio$ and thus the competitive ratio by at most an additive $\epsilon$ as $\opt \ge 2$ and $\ell \le 1$.
\end{proof}

\section{Semi-Line} \label{SemiLine}
In this section, we study the case of a semi-line.
Without loss of generality, we consider a semi-line starting at the origin $O$. We denote the location of requests by their distance from the origin, the distance of the furthest request by $L$ and say that the semi-line has length $L$.

First, we show the following lower bounds when only the number of requests is known to an algorithm for both the closed and the open variant (the formal proofs are in Appendix~\ref{app:KnownNumberClosedSemiline} and~\ref{app:KnownNumberOpenSemiline}, respectively).

\begin{proposition} \label{prop:KnownNumberClosedSemiline}
When locations are unknown, for any $\epsilon > 0$, there is no $(4/3-\epsilon)$-competitive algorithm for closed OLTSP on the semi-line, even if the number of requests is known.
\end{proposition}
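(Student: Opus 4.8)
The plan is to prove this lower bound with an adversary that fixes the \emph{number} of requests to $n=2$ but chooses their \emph{locations} adaptively, exploiting that the online server learns a request only at its release time. First I would reveal a request $q_1$ at position $1$ at time $t=1$; since the server knows that a second request is still to come (it knows $n=2$), it can neither declare itself finished nor know where $q_2$ will appear. I would then commit in advance to revealing $q_2$ at a fixed time $\tau\ge 2$ (e.g.\ $\tau=3$), deciding only its \emph{location} at that instant, as a function of the server's position $p:=x(\tau)$ at time $\tau$. The two candidate locations are a ``far'' request at position $\tau$ (released at $\tau$) and a ``near'' request at the origin (released at $\tau$); crucially, both are revealed at the same moment $\tau$, so up to time $\tau$ the online algorithm cannot tell which threat it faces.

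The key step is the case distinction at time $\tau$. If $p\le \tau/3$, I reveal the far request. Then $\opt$ sweeps out once, serving $q_1$ en route at time $1$ and reaching position $\tau$ exactly at its release time $\tau$, so $|\opt|=2\tau$; meanwhile the server, caught near the origin, still has to travel from $p$ out to $\tau$ and back to $O$, so $|\alg|\ge \tau+(2\tau-p)=3\tau-p\ge 8\tau/3$, giving $\ratio\ge 4/3$. If instead $p> \tau/3$, I reveal the near request at $O$; now $\opt$ can stay near the origin and finish at time $\max(2,\tau)=\tau$, whereas the (closed) server must still return from $p$, so $|\alg|\ge \tau+p> 4\tau/3$ and again $\ratio> 4/3$. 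The two bounds cross exactly at $p=\tau/3$, where both equal $4/3$, so \emph{whatever} position the server occupies at time $\tau$, one of the two threats forces a ratio of at least $4/3$.

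The main obstacle, and the part requiring care, is verifying that the stated lower bounds on $|\alg|$ hold for \emph{every} $p\in[0,\tau]$ and that $|\opt|$ is as cheap as claimed in both branches. In particular I would check that serving $q_1$ never costs the server anything extra (position $1$ lies on the path to the far request, or is swept on the return trip, and $q_1$ is released early enough that $\opt$ passes it without idling), that the far position $\tau$ is chosen so $\opt$ arrives exactly at the release time and thus never waits, and that the server genuinely cannot have completed its tour before $\tau$, which is where the knowledge that $n=2$ is used. Once these are in place, the minimum over the server's position at time $\tau$ of the maximum of the two ratios is exactly $4/3$, so no $(4/3-\epsilon)$-competitive algorithm can exist, which is the claim.
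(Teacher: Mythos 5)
Your proposal is correct and follows essentially the same adversary strategy as the paper: observe the server's position at a decision time and place a request at the origin if it is more than a third of the way out, and far away otherwise, so that the two cases balance at ratio $4/3$. The paper's version is leaner (a single adaptively placed request, decision at time $1$, far location at distance $1$), and your extra request $q_1$ and the scaling of the far location to $\tau$ change nothing essential.
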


\begin{proposition} \label{prop:KnownNumberOpenSemiline}
When locations are unknown, for any $\epsilon > 0$, there is no $(3/2-\epsilon)$-competitive algorithm for open OLTSP on the semi-line, even if the number of requests is known.
\end{proposition}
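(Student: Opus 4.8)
The plan is to use an adaptive adversary whose entire instance consists of a single decisive request, so that the number of requests ($n=1$) is trivially known to the algorithm; this alone already forces the ratio up to $3/2$, and the construction extends to any prescribed known $n$ by padding with dummy requests placed at the origin and released at time $0$, which both $\alg$ and $\opt$ serve for free at the start. Fix a large time $T>0$ and release nothing until time $T$. Since the server starts at the origin of the semi-line and moves at speed at most $1$, at time $T$ it occupies some position $x:=x(T)\in[0,T]$, and I would argue that this single scalar is all that matters: whatever strategy the algorithm follows, and whatever it infers from knowing $n=1$, its state at the unique release time is just this position.

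At time $T$ the adversary has two threats available, both of which I would show cost $\opt$ only $T$. Releasing the request at the origin at time $T$ lets $\opt$ sit at the origin and serve it at time $T$, so $|\opt|=T$, whereas $\alg$ must travel back from $x$, giving $|\alg|\ge T+x$ and a ratio of $1+x/T$. Releasing it instead at the farthest reachable point, position $T$, at time $T$ lets $\opt$ go straight there and arrive exactly upon release, so $|\opt|=T$, whereas $\alg$ must travel from $x$ out to $T$, giving $|\alg|\ge 2T-x$ and a ratio of $2-x/T$. Choosing the better of the two threats yields
\[
\frac{|\alg|}{|\opt|}\ \ge\ \max\Big\{\,1+\tfrac{x}{T},\ 2-\tfrac{x}{T}\,\Big\}\ \ge\ \tfrac32 ,
\]
since $\max\{1+a,\,2-a\}\ge 3/2$ for every $a\in[0,1]$, with equality only at the balance point $x=T/2$.

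The two points I would verify carefully are (i) that $\opt$ pays exactly $T$ against both threats — this is precisely where the open variant is used, as $\opt$ never has to return and can either remain at the origin or reach the far endpoint just in time — and (ii) that the two threats are balanced exactly at $x=T/2$, so no single position defends against both. I expect this balancing, rather than any computation, to be the crux: it is what pins the bound at $3/2$ and shows that perfect knowledge of $n$ offers no escape, because the algorithm's only freedom is the one number $x$. Finally, to obtain the strict statement (no $(3/2-\epsilon)$-competitive algorithm for every $\epsilon>0$) without ever placing a request exactly at the origin, the first threat can be released at a point $\delta>0$ arbitrarily close to the origin, giving a ratio $\ge 3/2-\delta/T$, which tends to $3/2$ as $\delta\to 0$.
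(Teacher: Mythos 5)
Your proposal is correct and follows essentially the same argument as the paper: an adaptive adversary with a single request (number known, location not), who at a fixed time looks at the server's position $x$ and releases the request either at the origin (if $x$ is far) or at the far endpoint (if $x$ is close), with the two threats balancing at the midpoint to force a ratio of $3/2$ while $\opt$ pays only the release time. The extra remarks (padding to arbitrary known $n$, perturbing the origin request by $\delta$) are harmless but unnecessary; the paper places the request exactly at the origin and uses $T=1$.
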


Next, we discuss OLTSP-L on the semi-line. For the open version of the problem, we give a lower bound of $4/3$ on the semi-line (Theorem~\ref{theo:semilineOpenLB}) and then describe an algorithm that is $13/9$-competitive (Theorem~\ref{theo:semilineOpenCompetitiveRatio}). 

Let us first show the lower bound of $4/3$ on the semi-line.

\begin{theorem}\label{theo:semilineOpenLB}
	For any $\epsilon >0$, there is no $(4/3- \epsilon)$-competitive algorithm for open OLTSP-L on the semi-line.
\end{theorem}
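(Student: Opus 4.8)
The plan is to design an \emph{adaptive} adversary on the semi-line that exploits the one feature distinguishing the open from the closed variant: since the server never has to return to $O$, a request near the origin that is released only after the server has travelled away forces an extra ``return'' segment, whereas \opt, knowing the release schedule, can serve that request first (or en route) and \emph{end} at the far end. Concretely, I would work with two known locations, a \emph{far} request $r_f$ at distance $1$ from $O$ and a \emph{near} request $r_n$ close to $O$, and keep \emph{both} release times under the adversary's control. The intended conclusion is that whatever the server does, it is caught either too far out (and must backtrack for $r_n$) or too close to $O$ (and is then too slow to reach $r_f$), and in both cases $|\alg|\ge (4/3-\epsilon)\,|\opt|$.

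The key steps, in order, are: (1) fix the two locations and express $|\opt|$ as a function of the eventual release times, distinguishing the two offline strategies ``serve $r_n$ first and finish at $r_f$'' versus ``serve $r_f$ first and come back for $r_n$''; the minimum of these two is the quantity I must beat. (2) Choose a critical decision time $t^\star$ and a position threshold; observe the server's position $p$ at time $t^\star$. (3) Branch: if the server has committed outward ($p$ large), release $r_n$ behind it so that the outward progress is largely wasted, while timing $r_f$ so that \opt, having gone near-first, finishes essentially in the free distance; if the server has lingered near $O$ ($p$ small), release $r_f$ so that the server still owes almost the full trip to distance $1$, while \opt reaches $1$ just in time. (4) In each branch, lower bound $|\alg|$ by the unavoidable remaining travel from position $p$ (visiting both $O$-side and distance-$1$ side), upper bound $|\opt|$ by the better of the two offline routes, take the ratio at the threshold, and let $\epsilon\to 0$.

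The main obstacle is that on a semi-line there is only one direction of travel, so the server can always \emph{hedge} by going out and returning; a naive instance that fixes the far request's release date early and only makes the near request adaptive is not strong enough, because a threshold-type online algorithm that waits at the near point for a tuned amount of time serves both requests within a factor strictly below $4/3$ (the optimal online response on that single-sided instance balances out below $4/3$). The difficulty is therefore to make the uncertainty genuinely \emph{two-sided} and to couple the revealed schedule tightly to the server's committed position at $t^\star$, so that the two branches are \emph{balanced}: lingering is punished exactly as hard as rushing. Certifying this is really a min-max statement over \emph{all} deterministic strategies, so the case analysis in step~(4) must cover every trajectory (in particular the hedging ones) and show that the two branches cross precisely at ratio $4/3$, which is where the tightness of the bound comes from and where the bulk of the bookkeeping lies.
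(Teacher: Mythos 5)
Your high-level intuition (an adaptive adversary on the semi-line that punishes the server for being either too far out or too close to $O$) matches the paper's, but the concrete construction you propose --- two known locations, one near and one far, with both release times adaptive --- cannot reach the bound $4/3$, and the obstacle you yourself identify (hedging) is exactly what breaks it. The problem is a consistency constraint on adaptive adversaries: anything released before the branching time $t^\star$ is observed identically in both branches. To punish a server that lingers near $O$, you need $|\opt|$ to be small, which forces the near request to be released early enough that $\opt$ serves it essentially for free; but an early release is seen by every server, including the one that rushed outward, so the same near request cannot simultaneously act as the late ``trap'' behind the rushing server. With only two locations there is no third request left to play the trap in the other branch, and a single branching decision then tops out well below $4/3$: for instance, with locations $\{0,1\}$ and the far request released at time $0$, the server that waits a small $\delta>0$ at the origin and then sweeps to $1$ and back achieves ratio $\max\bigl\{1+\tfrac{1}{3+2\delta},\,1+\tfrac{\delta}{2}\bigr\}$, which is strictly below $4/3$; keeping both release times adaptive does not repair this, because making the near request late only makes $\opt$'s ``serve near first'' route the cheaper one, which the lingering server can imitate.

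The paper's proof overcomes this with a structurally richer instance: \emph{four} requests at $0$, $1/6$, $5/6$ and $1$ on a unit semi-line, a release schedule stretching to time $2$ so that $|\opt|=2$ in every branch, and \emph{two} rounds of adaptivity (a first branching on the server's position at $t=1$ deciding which outer request, at $0$ or $1$, is released first, and a second branching at $t=7/6$ using the inner requests at $1/6$ and $5/6$ as traps against whichever direction the server then commits to). The inner requests, sitting at distance $1/6$ from the endpoints, are what force the extra $2/3$ of travel or waiting that yields $|\alg|\ge 8/3$ against $|\opt|=2$. If you want to salvage your write-up, you should replace the two-location instance by (at least) this four-request, two-stage construction; the ``min--max over all trajectories'' bookkeeping you describe in step (4) is then exactly the case analysis on the server's position at times $1$ and $7/6$ that the paper carries out.
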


\begin{proof}
    Let $\epsilon > 0$ and consider a semi-line with length $1$ starting at $O$. We place $4$ requests at distance $0$, $1/6$, $5/6$ and $1$ from $O$, respectively. Consider an algorithm $\alg$ serving the requests and let $s$ be the distance of $\alg$ from $O$.
    
    At time $t=1$:
    \begin{itemize}
        \item if $s\in[0,1/6)$ or $s\in(5/6,1]$, we release the furthest request $q$ from $\alg$ ($q$ is at $1$ or $0$, respectively). Then, we release the next requests, one after the other, starting from $q$ and ending on the opposite side at times $t=7/6$, $t=11/6$ and $t=2$, respectively.
        
        \item else: we release the two outer requests (at positions $0$ and $1$). We then release the furthest inner request (at $1/6$ or $5/6$) at time $t=7/6$. At time $t=11/6$, we release the final inner request.
    \end{itemize}
    
    Note that $\opt$ needs exactly  $2$ units of time to serve all requests. So, $|\opt|=2$. Without loss of generality, we can assume that $\alg$ is at $s\in [1/2,1]$ at time $t=1$. We now distinguish the following cases:
    \begin{itemize}
        \item $s \in (5/6,1]$ at $t=1$: the furthest outer request at $0$ is released at $t=1$ and the request at $1$ is released at time $t=2$. Since $\alg$ has to serve both these requests, it needs at least $2+5/6=17/6$ units of time and $|\alg| \geq 17/6$.
        
        \item $s \in [1/2, 5/6]$ at $t=1$: the two outer requests are released (at $0$ and $1$). We now study the following three cases separately according to the position of $\alg$ at time $t=7/6$. Note that $\alg$ cannot be at $s<1/3$ when $t=7/6$.
        
        \begin{itemize}
            \item $s \in [1/3,1/2]$ at $t=7/6$: the request at $5/6$ is released and $\alg$ has not served any requests yet. If $\alg$ moves to the right after $t=7/6$, then the earliest it can serve all the requests is $7/6+1/2+1=8/3$. If it moves to the left, then the earliest it can finish is $11/6 + 5/6=8/3$ ($11/6$ is the release time of the request at $1/6$). So, $|\alg| \geq 8/3$. 
            
            \item $s \in (1/2,5/6)$ at $t=7/6$: the request at $1/6$ is released and $\alg$ has not served any requests yet. If $\alg$ moves to the left after $t=7/6$, the the earliest it can finish is $7/6+1/2+1=8/3$. If it moves to the right, then the earliest it can serve all the requests is $11/6+5/6=8/3$ ($11/6$ is the release time of the request at $5/6$). So, $|\alg| \geq 8/3$. 
            
            \item $s \in [5/6,1]$ at $t=7/6$: the request at $1/6$ is released. If $\alg$ serves the request at 5/6 before the one at 0, then it needs at least 11/6+5/6=8/3. Otherwise, since $s\geq 5/6$ at $t=7/6$, $\alg$ cannot serve the request at 0 before 2, and then needs at least $2+5/6>8/3$.

            
        \end{itemize}
    \end{itemize}
    Having $|\alg| \geq 8/3$ and $|\opt| = 2$ we get a lower bound of $4/3$.
\end{proof}

We are now ready to present an algorithm that is $13/9$-competitive for open OLTSP-L on the semi-line. The principle of this algorithm (see Algorithm~\ref{algo:semi-line} for a formal description) is to reach position $L/2$ and wait until segment $[0,L/4]$ or $[3L/4,L]$ is completely released. More precisely, it waits until all consecutive requests inside the segment $[0,L/4]$ or $[3L/4,L]$ are released. After that, the algorithm moves to position $0$ or $L$ and serves all requests on the semi-line without turning around again and possibly waiting multiple times. To keep the competitive ratio small, we have to be careful with how we move away from the origin to reach $L/2$. 

\begin{itemize}
    \item At first, $\alg$ moves from the origin only if it does not leave any unreleased requests behind. It follows the same strategy until $t=\min\{L/2+x,3L/4\}$, where $x$ is the furthest request position such that $[0,x)$\footnote{Strictly speaking, if $x \ge L$, all requests and thus $[0,x]$ would be released. We will ignore this and only use half-open intervals to avoid technicalities.} is completely released by time $t$. By construction, the contiguous segment starting from 0 which is served by $\alg$ is at least the size of that of $\opt$.
    
    \item If $x \geq L/4$, then $\alg$ continues the same strategy until it has served all requests. Otherwise, if $x<L/4$, $\alg$ is at position $x$ at $t=L/2+x$ and abandons that strategy to reach position $L/2$ at exactly $t=L$, where it waits for $[0,L/4]$ or $[3L/4,L]$ to be completely released as described above.
\end{itemize}

\begin{algorithm}[ht]
\DontPrintSemicolon
	\KwIn{Offline: Request locations $p_1, \dots, p_n$, where $p_1 \leq p_2 \leq ... \leq p_n$\\
	\phantom{\textbf{Input:} }Online: Release times $t_1, \dots, t_n$}
	$L \leftarrow d(O,p_n)$\\
	Starting from the origin, move in the direction of $p_n$ and serve all released requests that are encountered. Stop at the first unreleased request. Wait until it can be served, serve it and continue this procedure until $t=\min\{L/2+x,3L/4\}$, where $x$ is the furthest position such that $[0,x)$ is completely released by time $t$.\\
	\lIf{$x \ge L/4$}{
    continue in the same direction until all requests are served. The algorithm terminates.}
    \Else{Start moving to position $p=L/2$. \\
        \While{$t < L$ \emph{\textbf{and}} $[0,L/4]$ is completely released}{return to position $x$ and serve the requests in $[x,L/4]$.
        
        \textbf{break}
        }
	    Wait at $L/2$ until $[0,L/4]$ \big(or $[3L/4, L]$\big) is completely released and serve it moving from $x$ to $L/4$ \big(or from $L$ to $3L/4$\big). Continue in the same direction and stop at the first unserved request.
        Wait until it can be served and continue this procedure until all requests are served.
    }
    \caption{Algorithm for open OLTSP-L on the semi-line}
	\label{algo:semi-line}
\end{algorithm}

\begin{theorem}\label{theo:semilineOpenCompetitiveRatio}
	Algorithm~\ref{algo:semi-line} has a competitive ratio of $13/9$ for open OLTSP-L on the semi-line.
\end{theorem}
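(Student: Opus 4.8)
The plan is to normalize $L=1$ by scaling and to establish a handful of lower bounds on $|\opt|$ that I will then combine in each branch of Algorithm~\ref{algo:semi-line}. The two basic bounds are $|\opt|\ge 1$ (the server must reach the furthest request, at distance $L=1$) and $|\opt|\ge t_{\max}:=\max_i t_i$ (every request is served no earlier than its release). A third, structural, ingredient is a \emph{prefix-domination} property: the greedy rightward motion of the first phase serves, at every moment, a contiguous prefix $[0,\cdot]$ that is at least as long as the contiguous prefix served by $\opt$, since no algorithm starting at $O$ can cover $[0,q]$ faster than this greedy sweep. This is exactly the invariant already noted in the algorithm's description.

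First I would dispose of the case $x\ge L/4$, where the algorithm never turns and simply sweeps to the right, finishing at time $\max\{1,\max_p (t_p+(1-p))\}$. Here the crucial point is that $x\ge 1/4$ forces the whole segment $[0,1/4]$ to be released by time $3/4$, so only requests with position $p>1/4$ can be released late. For the binding request $p^\star$ I would compare the sweep time $t_{p^\star}+(1-p^\star)$ with the two orderings available to $\opt$ (serve $p^\star$ then reach $1$, or reach $1$ then return to $p^\star$), obtaining $|\opt|\ge\min\{t_{p^\star}+(1-p^\star),\max(t_{p^\star},1+(1-p^\star))\}$; a short optimization over $p^\star\in(1/4,1]$ and $t_{p^\star}$ shows the ratio never exceeds $10/7<13/9$ in this case.

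The heart of the proof is the case $x<L/4$, where the server reaches $L/2$ exactly at time $1$ and then waits until the first time $t^{(1)}$ at which one of the quarters $[0,1/4]$, $[3/4,1]$ is fully released. The key extra lower bound here is $|\opt|\ge t^{(1)}+1/2$: just before $t^{(1)}$ both quarters still contain an unreleased request, and these two requests lie at distance at least $1/2$, so $\opt$ cannot finish serving both before $t^{(1)}+1/2$. I would then split according to which quarter is released first (the left and right cases are not symmetric, because the prefix $[0,x)$ has already been served, and because of the early turn handled by the \textbf{while} loop), and, within each, according to whether the final one-directional sweep has to wait. When the sweep waits, the last wait occurs at a request released no later than $t_{\max}\le|\opt|$, and because a full quarter has already been served the remaining tail is short; this lets me bound $|\alg|$ by a term governed by $t_{\max}$ rather than by $t^{(1)}$. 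To close these sub-cases I would pair $|\opt|\ge t^{(1)}+1/2$ and $|\opt|\ge t_{\max}$ with a far-reach bound of the form $|\opt|\ge\min\{t_a+(1-a),\max(t_a,1+(1-a))\}$ for a late request at position $a$ on the opposite extremity, reflecting that $\opt$ must both reach the far end and come back for that late request.

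The main obstacle is precisely this last combination: neither the waiting bound $t^{(1)}+1/2$ nor the basic bounds $1$ and $t_{\max}$ alone are strong enough (they would only yield $5/3$), and the improvement to $13/9$ comes from the fact that the adversary cannot simultaneously force a long wait at $L/2$ and a long back-and-forth final sweep — forcing one of them makes one of the sharper lower bounds on $|\opt|$ kick in. I expect the bookkeeping of the final-sweep waits, and verifying that the worst trade-off over the relevant parameters (namely $t^{(1)}$, the position and release time of the last late request, and $x$) is exactly $13/9$, to be the most delicate and calculation-heavy part; the remaining sub-cases reduce to elementary interval arguments once the right lower bounds are in hand.
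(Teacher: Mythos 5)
Your skeleton matches the paper's (prefix domination, the case split on $x\ge L/4$ versus $x<L/4$, a further split on $t^{(1)}$ and on whether the final sweep waits), and your treatment of the cases $x\ge L/4$ and $t^{(1)}\ge 7L/4$ would go through: in the latter, $|\opt|\ge t^{(1)}+L/2$ together with $|\alg|\le\max\{t^{(1)}+3L/4-2x,|\opt|\}+3L/4$ does give exactly $13/9$. The genuine gap is in the intermediate regime $L\le t^{(1)}<7L/4$ with $x$ close to $0$. There $|\alg|$ can be as large as $t^{(1)}+3L/2-2x$ (reach $L/2$ at time $L$, wait until $t^{(1)}$, walk back to $x$, serve $[x,L/4]$, then sweep to $L$), while the strongest of the lower bounds you list is $|\opt|\ge\max\{t^{(1)}+L/2,\ 7L/4\}$: the pair-of-quarters bound gives $t^{(1)}+L/2$, and your far-reach bound applied to the late request at $a\in[x,L/4]$ gives only $\min\{t^{(1)}+3L/4,\max(t^{(1)},2L-a)\}=7L/4$. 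At $t^{(1)}=5L/4$, $x\to 0$ this yields a ratio of $(11L/4)/(7L/4)=11/7>13/9$, so your toolkit cannot close this case; it is not merely "calculation-heavy bookkeeping."

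The missing ingredient is a multi-request trajectory bound on $\opt$, which is what the paper's Observation 1 (w.l.o.g.\ $\opt$'s first turn is at $L$) is for. Either $\opt$ never turns, in which case it is a monotone sweep and must still cover $3L/4$ after serving the late request in $[x,L/4]$, giving $|\opt|\ge t^{(1)}+3L/4$; or it reaches $L$ before returning, in which case the request that pins $x$ (located at $x$, released after $L/2+x$) together with the late requests in both quarters forces $|\opt|\ge 2L-x$ (for $t^{(1)}<5L/4$) or $|\opt|\ge\min\{t^{(1)}+3L/4-x,\,9L/4\}$ (for $t^{(1)}\in[5L/4,7L/4)$). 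Your single-request far-reach bound $\min\{t_a+(1-a),\max(t_a,1+(1-a))\}$ is exactly what fails here: its first branch ("serve $a$ first, then go to $L$") is cheap for the request at $a=x\approx 0$ released just after $L/2$, but $\opt$ cannot actually take that branch because of the other late requests, and capturing this interaction requires arguing about $\opt$'s whole trajectory rather than one request at a time. Also note that your bound $|\opt|\ge t^{(1)}+1/2$ is only valid once $t^{(1)}\ge L$ (for $t^{(1)}<L$ only the left quarter is constrained), so the sub-case $t^{(1)}\in[L/2+x,L)$ needs a separate argument as in the paper.
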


\begin{proof}

    Without loss of generality, let $t^{(1)}$ be the first time when $[0,L/4]$ is completely released (when $t^{(1)}<L$) or the first time when $[0,L/4]$ or $[3L/4,L]$ is completely released (when $t^{(1)}\geq L$).  
    
    The proof is based on the following observations.
    \vspace{0.2cm}

    \noindent {\it Observation 1. If $\opt$ turns around at least once, then, without loss of generality, the first time it turns is at $L$.}
    \vspace{0.2cm}

    Assume that this is not the case. This means that $\opt$ turns around at least twice, since it turned around before reaching $L$, where there is a request. Thus, $\opt$ traversed some segment $[s_1,s_2]$ twice. Instead of turning around at $s_2$ and one more time at $s_1$, we adapt the algorithm, calling it $\opt'$, by letting it wait at $s_1$ until $t^{\opt}_{s_1}$, the time at which $\opt$ serves $s_1$. Afterwards, we copy the behavior of $\opt$. This means that the completion time of our new algorithm $\opt'$ is the same as the completion time of $\opt$. By repeatedly applying this adaptation, we can make sure that $\opt$ never turns around before reaching $L$ without increasing the overall time.
    \vspace{0.2cm}

    \noindent {\it Observation 2. Assume that $\alg$ has already served $[0,x_{\alg})$ at $t=t^{(1)}$. If $t^{(1)} \leq L/2+x$ and $x<L/4$, then $\opt$ has served at most $[0,x_{\opt})$ at $t=t^{(1)}$, where $x_{\opt} \leq x_{\alg}$.} 
    \vspace{0.2cm}

    The above observation is true by construction. $\alg$ has served the maximal continuous segment $[0,a)$ that any algorithm can serve, for some $a\geq 0$, until $t^{(1)} \leq L/2+x$ (when $x<L/4$). 
    
    \vspace{0.2cm}

    \noindent {\it Observation 3. When $x<L/4$ and $t=L/2+x$, $\alg$ is at position $x$.} 
    \vspace{0.2cm}
    
    Assume that $\alg$ is at position $0 \le x_0 < L/4$ at time $t_0=L/2$. If $x_0=0$, then $x=x_0=0$ and this phase of the algorithm (line 2 in the description) terminates with $\alg$ at position $x=0$. Otherwise, $x_0>0$ and this phase runs until at least $t_1=L/2+x_0$. If at time  $t_1=L/2+x_0$ there is no completely released segment $[0,x_1)$ with $L/4>x_1>x_0$, then $x=x_0$, $t=L/2+x_0$ (since $x<L/4$) and the phase terminates with $\alg$ at position $x$. If at time $t_1=L/2+x_0$ there is such a fully released segment \big($[0,x_1)$ with $L/4 >x_1>x_0$\big), then this phase continues until at least $t_2=L/2+x_1$. By definition of the algorithm it holds that $\alg$ will have moved to $x_1$ by $t_2$ at the latest. The starting time is $t_1$ and the starting position is at least $x_0$. Thus, the distance is at most $x_1 - x_0$. The available time is $t_2 - t_1 = x_1 - x_0$, and whenever $\alg$ has not reached the end of a fully released segment, it will move with unit speed towards this point. 
    Now, either $x=x_1$ or there exists a larger completely released segment $[0, x_2)$ with $L/4 > x_2 > x_1$ and we can argue as above\dots
    
    Note that whenever a bigger segment exists, this is because at least one new request was released -- but there are only $n$ requests in total (and at least $q_n$ is outside of $[0, L/4)$). Hence, after $i <n$ many steps, no larger completely released segment $[0,x_{i+1})$ can exist such that $x_{i+1} < L/4$ and $\alg$ is at position $x = x_i$ at $t=L/2+x$.
    \vspace{0.2cm}

    We now distinguish the following cases:
    
    \begin{itemize}
        \item If $x \ge L/4$, then, without loss of generality, we assume that $\opt$ turns around at least one time. Otherwise, we have that $|\alg|=|\opt|$. So, from Observation $1$ we can assume that the first time that $\opt$ turns around is at $L$.
        
        Let $x_{\alg}$ and $x_{\opt}$ be such that $[0,x_{\alg})$ and $[0,x_{\opt})$ have been served by $\alg$ and $\opt$, respectively, when $\opt$ reaches $L$ the first time. We have that $x_{\alg} \geq x_{\opt}$ and $x_{\alg} \ge L/4$.
        
        Since $\opt$ turns around at $L$ we get $|\opt| \geq L + L - x_{\opt}=2L - x_{\opt}$. We also have that $|\alg| \leq |\opt| + L - x_{\alg}$ and, since $x_{\alg} \geq x_{\opt}$, it follows that
    
        
        \begin{equation*}
            \ratio \leq \frac{|\opt| + L - x_{\alg}}{|\opt|}
            \leq 1 + \frac{L - x_{\alg}}{2L - x_{\opt}} \leq
            1+\frac{\frac{L}{x_{\alg}}-1}{\frac{2L}{x_{\alg}}- \frac{x_{\opt}}{x_{\alg}}}
            \leq 1+\frac{\frac{L}{x_{\alg}}-1}{\frac{2L}{x_{\alg}}- 1} \enspace.
        \end{equation*}
        
        Setting $a=\frac{L}{x_{\alg}}$, we have that $f(a) = \frac{a-1}{2a-1}$ is increasing at $a = \frac{L}{x_{\alg}} \in (1,4]$. Thus, for $a=4$ we get $\ratio \le 1 + \frac{4-1}{8-1} = 10/7.$
        
        \item If $x < L/4$ and $t^{(1)}\in [L/2+x,L)$, then $\alg$ has already served $[0,x)$ and is at position $(t^{(1)}-L/2)$ at $t=t^{(1)}$. It starts returning to position $x$. $\opt$ has already served $[0, x_{\opt})$ and is at position $s_{\opt} \leq t^{(1)}$ at time $t=t^{(1)}$. 
        
        \begin{itemize}
            \item If $s_{\opt} > t^{(1)}-L/2$, then $x_{\opt} \leq x$ and $\opt$ is closer to $L$ than $\alg$. Thus, $\opt$ has to turn around at least one time to serve $[x,x^+)$ (where $x^+$ is after $x$ on the semi-line but unknown) and we can assume that the first time it turns around is at $L$. Therefore,
            $|\opt| \geq t^{(1)} + L - s_{\opt} + L-x
                \geq 2L-x \text{, since $t^{(1)} \geq s_{\opt}$}.$
            Since $x< L/4$, we have that $|\opt| > 7L/4.$
            
            $\alg$ at $t=t^{(1)}$ starts returning to $x$ and has to serve $[x,L]$ and possibly wait for unreleased requests. So, we get
            \begin{align*}
                |\alg| &\leq \max\Bigl\{t^{(1)}+t^{(1)}-L/2-x+L/4-x, |\opt|\Bigl\}+3L/4 \\
                &\leq \max\Bigl\{7L/4-2x,|\opt|\Bigl\}+3L/4 = |\opt| + 3L/4 \enspace.
            \end{align*}
            
            Overall, this means that
            $\ratio  \leq 3/7 + 1 = 10/7 .$
            
            
            \item Else: $s_{\opt} \leq t^{(1)}-L/2$ and $x_{\opt} < L/4$.
            
            \begin{itemize}
                \item If $\opt$ does not turn around at any time, then $s_{\opt} \leq x < L/4$ at time $t=t^{(1)}$, since $[0,d]$ with $d\ge L/4$ is not completely released before $t^{(1)}$, and $s_{\opt} = x_{\opt}$.
                
                We have that $|\opt| \geq t^{(1)} + L - s_{\opt} \geq t^{(1)} + 3L/4.$
                
                In addition, we get
                \begin{equation*}
                    |\alg| \leq \max\Bigl\{t^{(1)}+t^{(1)}-L/2-x+L-x, |\opt|\Bigl\} 
                \label{eq:alg_3}
                    = 
                    \max\Bigl\{2t^{(1)}+L/2-2x, |\opt|\Bigl\} \enspace.
                \end{equation*}
                It follows with the assumption $t^{(1)} < L$  that
                \begin{align*}
                    \ratio
                    \leq \frac{\max\Bigl\{2t^{(1)}+L/2-2x, |\opt|\Bigl\}}{|\opt|}
                    \leq \max\biggl\{ 1 + \frac{t^{(1)}-L/4-2x}{t^{(1)}+3L/4},
                    1 \biggl\}
                    \leq 10/7 \enspace.
                \end{align*}
                
                \item Else: $\opt$ turns around at least one time. Then, from Observation $1$, we can assume
                that the first time that $\opt$ turns around is at $L$. 
                
                So, from the assumption $s_{\opt} \leq t^{(1)}-L/2$ we get $|\opt| \geq t^{(1)} + L - s_{\opt} + 3L/4 \geq 9L/4$
                    and
                    $|\alg| \leq |\opt| + 3L/4$.
                Therefore, $\ratio
                    \leq 1+\frac{3L/4}{9L/4} = 4/3 .$
            \end{itemize}
        \end{itemize}
        
        \item If $x< L/4$ and $t^{(1)}\in [L,5L/4)$, then $\alg$ has already served $[0,x)$ and is at position $L/2$ at $t=t^{(1)}$. Since $x < L/4$ and $t^{(1)} \geq L$, we get
        \begin{equation}
            |\opt| \geq \min\Bigl\{t^{(1)}+3L/4, t^{(1)}+L-x-(t^{(1)}-L) \Bigl\} 
        \label{eq:opt_4}
            = \min\Bigl\{t^{(1)}+3L/4, 2L-x \Bigl\}
            \geq 7L/4\enspace.
        \end{equation}

        \begin{itemize}
            \item If $\alg$ moves left at $t^{(1)}$, then $\alg$ has to serve $[x,L/4]$. So, using \eqref{eq:opt_4}, we get
            \begin{equation*}
                |\alg| \leq \max\Bigl\{t^{(1)} + L/2-x+L/4-x, |\opt| \Bigl\} + 3L/4 = |\opt|+3L/4 \enspace.
            \end{equation*}
            
            Hence, we get
            $\ratio \leq 1 + \frac{3L/4}{ 7L/4} \leq 10/7.$
            
            \item If $\alg$ moves right at $t^{(1)}$, then
                $|\alg| \leq \max\Bigl\{t^{(1)}+3L/4, |\opt|\Bigl\}+3L/4-x \\
                \leq |\opt|+3L/4$.
            Again, it follows that
            $\ratio  \leq 1+\frac{3L/4}{7L/4} \leq 10/7.$

        \end{itemize}
        \item If $x < L/4$ and $t^{(1)} \in [5L/4,7L/4)$, then $\opt$ has not served $[x,3L/4]$ yet. If $\opt$ does not turn around at any time, then $\opt \geq t^{(1)}+3L/4$. Otherwise, the first time that $\opt$ turns around is at $L$ and we have the following 2 cases. If $\opt$ goes straight to $L$, then $\opt \geq t^{(1)}+3L/4$. Otherwise, $\opt$ can be at $L/4$ having already served $[0,L/4)$ at time $t=L/2+x+L/4-x=3L/4$. Then, it reaches $L$ at $t=3L/4+3L/4=3L/2$ and it does not complete the tour before $t=3L/2+3L/4=9L/4$. So, we get
        \begin{equation}
        \label{eq:opt_5}
            |\opt| \geq \min\Bigl\{t^{(1)}+3L/4-x,t^{(1)}+3L/4, 9L/4\Bigl\}
            = \min\Bigl\{t^{(1)}+3L/4-x, 9L/4\Bigl\} \enspace.
        \end{equation}
\allowdisplaybreaks
        \begin{itemize}
            \item If $\alg$ moves left at $t^{(1)}$, then it has to serve $[x,L/4]$.
            We have the following
            \begin{equation}
            \label{eq:alg_5}
                |\alg| \leq  \max\Bigl\{t^{(1)}+3L/4-2x, |\opt|\Bigl\}+3L/4 \enspace.
            \end{equation}
            From (\ref{eq:opt_5}), (\ref{eq:alg_5}) we have the following $4$ cases:
            \begin{align*}
            \text{(a) }&
                \ratio \leq \frac{t^{(1)}+3L/2-2x}{t^{(1)}+3L/4-x}
                \leq 1 + \frac{3L/4-x}{t^{(1)}+3L/4-x} \leq 10/7 \enspace,\\*
                &\text{since $x\leq L/4$ and $t^{(1)} \geq 5L/4$,}\\
            \text{(b) }&
                \ratio \leq \frac{|\opt|+3L/4}{|\opt|}
                \leq 1 + \frac{3L/4}{t^{(1)}+3L/4-x} \leq 10/7 \enspace,\\*
                &\text{since $x\leq L/4$ and $t^{(1)} \geq 5L/4$,}\\
            \text{(c) }&
                \ratio \leq \frac{t^{(1)}+3L/2-2x}{9L/4} 
                \leq \frac{t^{(1)}/L+3/2-2x/L}{9/4} \leq 13/9 \enspace,\\*
                &\text{since $x\leq L/4$ and $t^{(1)} \in [5L/4,7L/4)$,}\\
            \text{(d) }&
                \ratio \leq \frac{|\opt|+3L/4}{9L/4} \leq 4/3 \enspace.
            \end{align*}
            
            \item If $\alg$ moves right at $t^{(1)}$, then 
            \begin{equation}
            \label{eq:alg_6a}
                |\alg| \leq \max\Bigl\{t^{(1)} + 3L/4, |\opt|\Bigl\}+3L/4-x \enspace.
            \end{equation}
            From (\ref{eq:opt_5}), (\ref{eq:alg_6a}) we have the following $4$ cases:
            \begin{align*}
            \text{(a) }&
                \ratio \leq \frac{t^{(1)}+3L/2-x}{t^{(1)}+3L/4-x}
                \leq 1+ \frac{3L/4}{t^{(1)}+3L/4-x} \leq 10/7 \enspace,\\
                &\text{since $x\leq L/4$ and $t^{(1)} \geq 5L/4$,}\\
            \text{(b) }&
                \ratio \leq \frac{|\opt|+3L/4-x}{|\opt|}
                \leq 1+\frac{3L/4-x}{t^{(1)}+3L/4-x} \leq 10/7 \enspace,\\
                &\text{since $x\leq L/4$ and $t^{(1)} \geq 5L/4$,}\\
            \text{(c) }&
                \ratio \leq \frac{t^{(1)}+3L/2-x}{9L/4} \leq 13/9 \enspace,\\
                &\text{since $t^{(1)} < 7L/4$,}\\
            \text{(d) }&
                \ratio \leq \frac{|\opt|+3L/4-x}{9L/4} \leq 4/3 \enspace.
            \end{align*}
            
        \end{itemize}
        \item Else: $x < L/4$ and $t^{(1)} \geq 7L/4$, $|\alg|$ is at position $L/2$ at $t=t^{(1)}$ and we have that
        \begin{equation}
        \label{eq:opt_7}
           |\opt| \geq t^{(1)}+L/2 \enspace.
        \end{equation}
        We distinguish again the following cases:
        \begin{itemize}
            \item If $|\alg|$ moves left at $t^{(1)}$, then
            $|\alg| \leq \max\Bigl\{t^{(1)} + \frac{3L}{4}-2x, |\opt|\Bigl\}+\frac{3L}{4}$. Together with \eqref{eq:opt_7}, we get
            
            \begin{align*}
                \ratio &\leq \frac{\max\Bigl\{t^{(1)} + 3L/4-2x, |\opt|\Bigl\}+3L/4}{|\opt|} \leq \max\Biggl\{ \frac{t^{(1)} + 3L/4}{t^{(1)}+L/2}, 1 \Biggl\} + \frac{3L/4}{t^{(1)}+L/2} \\
                &\leq \max\Biggl\{1+\frac{L/4}{t^{(1)}+L/2}, 1 \Biggl\} + 1/3 \leq \max\{ 10/9, 1\}+1/3 = 13/9\enspace,
            \end{align*}
            where we used $t^{(1)} \geq 7L/4$ for the third and forth inequality.
            
            
            \item If $\alg$ moves right at $t^{(1)}$, then we have that 
                $|\alg| \leq \max\Bigl\{t^{(1)} +\frac{3L}{4}, |\opt|\Bigl\}+\frac{3L}{4}-x$.
            Using this as well as (\ref{eq:opt_7}), we get
            \begin{align*}
                \ratio &\leq \frac{\max\Bigl\{t^{(1)} +3L/4, |\opt|\Bigl\}+3L/4-x}{|\opt|} \leq \max\Biggl\{ \frac{t^{(1)}+\frac{3L}{4}}{t^{(1)}+\frac{L}{2}}, 1 \Biggl\}
                + \frac{\frac{3L}{4}-x}{t^{(1)}+\frac{L}{2}}  \leq \frac{13}{9}\enspace.
            \end{align*}
        \end{itemize}
    \end{itemize}
    Consequently, we have an upper bound of $13/9$.
\end{proof}

We now focus on the closed variant of the problem and give a simple optimal algorithm that achieves a competitive ratio of $1$ (Algorithm~\ref{algo:closed_semi-line}). Basically, the algorithm moves to the furthest request and waits until the request is released. Then, it serves it and returns back to the origin, possibly waiting at the locations of unreleased requests, serving all remaining requests. Let us remark here that the semi-line is a special case of the star (with only a single branch) or the line, and our algorithm is essentially a special case of our Algorithm~\ref{algo:starClosedWaitFirst} for the star and of algorithm \textit{FARFIRST} for the line (in~\cite{GouleakisLS22}). The first condition in Algorithm~\ref{algo:starClosedWaitFirst} is always satisfied and gives us exactly the algorithm for the semi-line with its improved competitive ratio.

\begin{algorithm}[ht]
	\KwIn{Offline: Request locations $p_1, \dots, p_n$, where $p_1 \leq p_2 \leq ... \leq p_n$\\
	\phantom{\textbf{Input:} }Online: Release times $t_1, \dots, t_n$}
	Go to $p_n$.\\
	Wait until the request at $p_n$ is released and serve it. \\
	Move back towards the origin $O$, serving all unserved requests on the way, waiting for their release if necessary.
    \caption{Algorithm for closed OLTSP-L on the semi-line}
	\label{algo:closed_semi-line}
\end{algorithm}

\begin{proposition}\label{prop:semilineClosedUB}
	Algorithm \ref{algo:closed_semi-line} has a competitive ratio of $1$ for closed OLTSP-L on the semi-line.
\end{proposition}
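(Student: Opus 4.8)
The plan is to show that Algorithm~\ref{algo:closed_semi-line} always matches the optimal cost, i.e.\ $|\alg| \le |\opt|$, since $|\alg| \ge |\opt|$ holds trivially for any feasible online solution. The key observation is that on a semi-line the furthest request $p_n$ at distance $L$ must be served by \emph{any} solution, and serving it requires the server to reach position $L$, which cannot happen before time $t_n$ (its release date) nor before time $L$ (the travel time from the origin). Hence every feasible solution, and in particular $\opt$, satisfies $|\opt| \ge \max\{L, t_n\} + L$, because after reaching $p_n$ one must still travel the distance $L$ back to the origin in the closed variant.

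First I would lower-bound $|\opt|$ by this quantity $\max\{L,t_n\}+L$. Then I would compute $|\alg|$ exactly. The algorithm goes straight to $p_n$, arriving at time $L$, waits there until $t_n$ if necessary (so it departs from $p_n$ at time $\max\{L,t_n\}$), and then walks back toward $O$ serving the remaining requests, possibly waiting at unreleased ones. The main step is to argue that none of these return-trip waits actually delays the server: I would claim that every request $p_i$ with $i<n$ has $t_i \le \max\{L,t_n\}$ so that by the time $\alg$ arrives at $p_i$ on the way back it is already released and no waiting occurs. This is \emph{not} true in general, so the correct argument is that the server reaches position $p_i$ at time $\max\{L,t_n\} + (L - p_i)$, and we must verify $t_i \le \max\{L,t_n\} + (L-p_i)$ for all $i$. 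Since $\opt$ must also serve $p_i$, and reaching $p_i$ requires time at least $\max\{p_i, t_i\}$ followed by a return of at least $p_i$, one gets $|\opt| \ge t_i + p_i$; combining with the fact that $\alg$'s arrival time at $p_i$ equals $\max\{L,t_n\}+(L-p_i)$ and $|\opt|\ge \max\{L,t_n\}+L$, the inequality $t_i \le \max\{L,t_n\}+(L-p_i)$ follows, so no waiting ever occurs on the return trip.

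Consequently the total completion time of $\alg$ is exactly $\max\{L,t_n\} + L$, matching the lower bound on $|\opt|$, which gives $|\alg| = |\opt|$ and hence a competitive ratio of $1$. The main obstacle is precisely the verification that the return-trip waits are never binding; once one has the clean lower bound $|\opt| \ge t_i + p_i$ for each request (which holds because $\opt$ must reach $p_i$ no earlier than $\max\{t_i,p_i\}$ and then return past $p_i$), this reduces to the elementary inequality above. I would therefore structure the proof as: (i) establish $|\opt| \ge \max\{L,t_n\}+L$ from the furthest request; (ii) establish $|\opt|\ge t_i+p_i$ for every $i$; (iii) use (ii) to show $\alg$ never waits on the way back, so $|\alg| = \max\{L,t_n\}+L$; and (iv) conclude $|\alg| = |\opt|$.
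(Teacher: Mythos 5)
Your step (iii) contains a genuine gap: the claim that $\alg$ never waits on the return trip is false, and the argument you give for it is logically invalid. You try to deduce the upper bound $t_i \le \max\{L,t_n\}+(L-p_i)$ by combining the two \emph{lower} bounds $|\opt|\ge t_i+p_i$ and $|\opt|\ge \max\{L,t_n\}+L$; two lower bounds on $|\opt|$ can never yield an upper bound on $t_i$, since the adversary may make $t_i$ (and hence $|\opt|$) arbitrarily large. Concretely, take $L=1$, $t_n=1$, and a request at $p_i=1/2$ with $t_i=10$: the server returns to position $1/2$ at time $3/2$ and must wait there until time $10$, so $|\alg|=10.5\neq \max\{L,t_n\}+L=2$. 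The proposition still holds in this example (one checks $|\opt|=10.5$ as well), but not for the reason you give, so your computation of $|\alg|$ and the resulting equality $|\alg|=|\opt|$ do not go through.

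The paper's proof avoids this entirely: it never claims the return trip is wait-free. Instead, if $\alg$ waits at all, let $t$ be the time at which its \emph{last} wait ends and $x$ the position where this happens; then $|\alg|=t+x$ exactly, since no further waiting occurs afterwards, while $\opt$ must serve that same request no earlier than its release time $t$ and then still travel distance at least $x$ back to the origin, giving $|\opt|\ge t+x=|\alg|$. Your parts (i) and (ii) are correct, and your instinct to compare against per-request lower bounds of the form $|\opt|\ge t_i+p_i$ is exactly right --- you just need to apply that bound to the single request responsible for the last wait, rather than trying to prove that waiting never happens on the way back.
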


\begin{proof}
In the closed case, $\opt$ has to serve the request at $L$ and also return to the origin. So, $|\opt| \geq 2L$.
If $\alg$ never waits for a request, then $|\alg|=2L$ and $\frac{|\alg|}{|\opt|} \leq 1$. Otherwise, let $t$ be the last time $\alg$ waits for a request and assume that $\alg$ is at distance $x$ from the origin. Then, $|\alg| = t + x$ and $\opt$ has to wait until $t$ to serve that request and also return to $O$. Thus, $|\opt| \geq t+x$ and we get $\frac{|\alg|}{|\opt|} \leq 1$.
\end{proof}

\section{Conclusion}\label{Conclusion}

In this work, we showed that the considered OLTSP-L problem admits an optimal $3/2$-competitive online algorithm for both variants on the general metric. Then, we provided several lower bounds and polynomial algorithms for some interesting metrics (ring, star, semi-line).

While the algorithm for the general metric is optimal, it does not run in polynomial time. A natural research direction would be to consider polynomial time algorithms for the general case. For instance, we can easily obtain a polynomial $2.5$-competitive algorithm (or more generally a $(1+r)$-competitive algorithm provided that there is an $r$-approximation algorithm for the off-line problem) for both variants of OLTSP-L as follows: Wait for all requests to be released and then take a tour/path. Clearly, $|\opt|$ is at least the time it takes for all requests to be released, hence if the chosen tour/path is optimal (resp. $r$-approximate) this algorithm has a competitive ratio of at most $2$ (resp. $1+r$). We get ratio 5/2 for the closed case by using   Christofides' heuristic~\cite{Christofides} to compute a TSP tour ($3/2$-approximation), and for the open case by using the $3/2$-approximation for path TSP in~\cite{Zenklusen}.

Note that the knowledge of the number of requests is sufficient to follow this strategy. Therefore, it is natural to consider if a better competitive ratio can be achieved with the extra knowledge of the locations of the requests. 

The study of other metrics could be also of interest, as well as improvements on the polynomial time upper bounds presented in this article.
	
Finally, we considered in this article only deterministic algorithms. Getting lower and upper bounds, both on general and specific metrics, using randomized algorithms is an interesting question.

\bibliography{bibliography}

\begin{thebibliography}{10}

\bibitem{AllulliAL05}
L.~Allulli, G.~Ausiello, and L.~Laura.
\newblock On the power of lookahead in on-line vehicle routing problems.
\newblock In Lusheng Wang, editor, {\em Computing and Combinatorics, 11th
  Annual International Conference, {COCOON} 2005, Kunming, China, August 16-29,
  2005, Proceedings}, volume 3595 of {\em Lecture Notes in Computer Science},
  pages 728--736. Springer, 2005.
\newblock \href {https://doi.org/10.1007/11533719\_74}
  {\path{doi:10.1007/11533719\_74}}.

\bibitem{AscheuerGKR90}
N.~Ascheuer, M.~Gr{\"o}tschel, S.~O. Krumke, and J.~Rambau.
\newblock Combinatorial online optimization.
\newblock In Peter Kall and Hans-Jakob L{\"u}thi, editors, {\em Operations
  Research Proceedings 1998}, pages 21--37, Berlin, Heidelberg, 1999. Springer
  Berlin Heidelberg.

\bibitem{AUSIELLO2008290}
G.~Ausiello, V.~Bonifaci, and L.~Laura.
\newblock The on-line asymmetric traveling salesman problem.
\newblock {\em Journal of Discrete Algorithms}, 6(2):290--298, 2008.
\newblock Selected papers from CompBioNets 2004.
\newblock URL:
  \url{https://www.sciencedirect.com/science/article/pii/S1570866707000172},
  \href {https://doi.org/https://doi.org/10.1016/j.jda.2007.03.002}
  {\path{doi:https://doi.org/10.1016/j.jda.2007.03.002}}.

\bibitem{AusielloDLP04}
G.~Ausiello, M.~Demange, L.~Laura, and V.~Th. Paschos.
\newblock Algorithms for the on-line quota traveling salesman problem.
\newblock {\em Inf. Process. Lett.}, 92(2):89--94, 2004.
\newblock \href {https://doi.org/10.1016/j.ipl.2004.06.013}
  {\path{doi:10.1016/j.ipl.2004.06.013}}.

\bibitem{AusielloFLST01}
G.~Ausiello, E.~Feuerstein, S.~Leonardi, L.~Stougie, and M.~Talamo.
\newblock Algorithms for the on-line travelling salesman.
\newblock {\em Algorithmica}, 29:560--581, 2001.
\newblock URL: \url{https://doi.org/10.1007/s004530010071}.

\bibitem{BernardiniLMMSS22}
G.~Bernardini, A.~Lindermayr, A.~Marchetti{-}Spaccamela, N.~Megow, L.~Stougie,
  and M.~Sweering.
\newblock A universal error measure for input predictions applied to online
  graph problems.
\newblock {\em CoRR}, abs/2205.12850, 2022.
\newblock \href {http://arxiv.org/abs/2205.12850} {\path{arXiv:2205.12850}},
  \href {https://doi.org/10.48550/arXiv.2205.12850}
  {\path{doi:10.48550/arXiv.2205.12850}}.

\bibitem{BjeldeHDHLMSSS21}
A.~Bjelde, J.~Hackfeld, Y.~Disser, Ch. Hansknecht, M.~Lipmann, J.~Mei{\ss}ner,
  M.~Schl{\"{o}}ter, K.~Schewior, and L.~Stougie.
\newblock Tight bounds for online {TSP} on the line.
\newblock {\em {ACM} Trans. Algorithms}, 17(1):3:1--3:58, 2021.
\newblock \href {https://doi.org/10.1145/3422362} {\path{doi:10.1145/3422362}}.

\bibitem{BlomKPS01}
M.~Blom, S.~O. Krumke, W.~de~Paepe, and L.~Stougie.
\newblock The online {TSP} against fair adversaries.
\newblock {\em {INFORMS} J. Comput.}, 13(2):138--148, 2001.
\newblock \href {https://doi.org/10.1287/ijoc.13.2.138.10517}
  {\path{doi:10.1287/ijoc.13.2.138.10517}}.

\bibitem{ChuanDLW19}
P.{-}C. Chen, E.~D. Demaine, C.{-}S. Liao, and H.{-}T. Wei.
\newblock Waiting is not easy but worth it: the online {TSP} on the line
  revisited.
\newblock {\em CoRR}, abs/1907.00317, 2019.
\newblock URL: \url{http://arxiv.org/abs/1907.00317}, \href
  {http://arxiv.org/abs/1907.00317} {\path{arXiv:1907.00317}}.

\bibitem{Christofides}
N.~Christofides.
\newblock Worst-case analysis of a new heuristic for the travelling salesman
  problem.
\newblock {\em Oper. Res. Forum}, 3, 2022.

\bibitem{GouleakisLS22}
Th. Gouleakis, K.~Lakis, and G.~Shahkarami.
\newblock Learning-augmented algorithms for online {TSP} on the line.
\newblock {\em CoRR}, abs/2206.00655, 2022.
\newblock \href {http://arxiv.org/abs/2206.00655} {\path{arXiv:2206.00655}},
  \href {https://doi.org/10.48550/arXiv.2206.00655}
  {\path{doi:10.48550/arXiv.2206.00655}}.

\bibitem{HuWLCL22}
H.{-}Y. Hu, H.{-}T. Wei, M.{-}H. Li, K.{-}M. Chung, and C.{-}S. Liao.
\newblock Online {TSP} with predictions.
\newblock {\em CoRR}, abs/2206.15364, 2022.
\newblock \href {http://arxiv.org/abs/2206.15364} {\path{arXiv:2206.15364}},
  \href {https://doi.org/10.48550/arXiv.2206.15364}
  {\path{doi:10.48550/arXiv.2206.15364}}.

\bibitem{Jaillet}
P.~Jaillet and M.~Wagner.
\newblock Online routing problems: Value of advanced information as improved
  competitive ratios.
\newblock {\em Transportation Science}, 40:200--210, 05 2006.
\newblock \href {https://doi.org/10.1287/trsc.1060.0147}
  {\path{doi:10.1287/trsc.1060.0147}}.

\bibitem{KellererPP04}
H.~Kellerer, U.~Pferschy, and D.~Pisinger.
\newblock {\em Knapsack Problems}.
\newblock Springer Berlin, Heidelberg, 2004.
\newblock \href {https://doi.org/10.1007/978-3-540-24777-7}
  {\path{doi:10.1007/978-3-540-24777-7}}.

\bibitem{Lawler91}
E.~L. Lawler, J.~K. Lenstra, A.~H. G.~Rinnooy Kan, and D.~B. Shmoys.
\newblock {\em The Traveling Salesman Problem: A Guided Tour of Combinatorial
  Optimization}.
\newblock Wiley Series in Discrete Mathematics \& Optimization, 1991.

\bibitem{PsaraftisSMK90}
H.~N. Psaraftis, M.~M. Solomon, T.~L. Magnanti, and T.-U. Kim.
\newblock Routing and scheduling on a shoreline with release times.
\newblock {\em Management Science}, 36(2):212--223, 1990.
\newblock URL: \url{http://www.jstor.org/stable/2661456}.

\bibitem{Zenklusen}
R.~Zenklusen.
\newblock A 1.5-approximation for path tsp.
\newblock In {\em Proceedings of the Thirtieth Annual ACM-SIAM Symposium on
  Discrete Algorithms}, SODA '19, page 1539–1549, USA, 2019. Society for
  Industrial and Applied Mathematics.

\end{thebibliography}

\appendix
\section{Lower Bounds for a Known Number of Requests}\label{app:LBKnownNumber}

\subsection{Proof of Proposition~\ref{prop:KnownNumberClosedRing}}\label{app:KnownNumberClosedRing}

{\bf Proposition~\ref{prop:KnownNumberClosedRing}.} 
{\it 
When locations are unknown, for any $\epsilon > 0$, there is no $(2-\epsilon)$-competitive algorithm for closed OLTSP on the ring, even if the number of requests is known.}

\begin{proof}
    We adapt the proof of~\cite{AusielloFLST01} to use a known number of requests. For any $\epsilon > 0$ let $n = 6 \cdot \lceil 1/\epsilon \rceil + 1$. We consider a ring with circumference 1. Analogously to the previous notation, we denote each point on the ring by the time it takes a unit-speed server to reach this point when traversing the ring clockwise, starting from the origin. We distribute $4(n-1)/6$ requests equidistantly on the ring, i.e., for each $i = 1, \dots, 4(n-1)/6$, a request is placed at position $(i-1) \cdot 6/(4(n-1))$. Note that the distance between two requests is at most $\epsilon/4$, since 
	\[ \frac{1}{\frac{4(n-1)}{6}} = \frac{1}{4\left\lceil \frac{1}{\epsilon}\right\rceil} \le \frac{\epsilon}{4}
	\] holds. We release these requests with release time 0. Hence, $(n-1)/3+1$ many requests remain to be distributed. So far, the distributed requests can be served optimally by time $t=1$.
	
	Now, consider any online algorithm $\alg$. Here, we denote by $pos_{\alg}(t)$ the position of $\alg$ at time $t$. For some $\alpha \in [0,1/2]$, at time $1/2 + \alpha$, the online server is at one of the two points with distance $1/2-\alpha$ from the origin. To see that this holds, consider the following function $f\colon [0,1/2] \to [0,1/2]$, denoting the distance of the online server from the origin at time $1/2 + x$. Since $f$ is a continuous function, so is $g(x) = f(x)-x$, where $g(0) = f(0) \ge 0$ and $g(1/2) = f(1/2) - 1/2 \le 0$. Hence, there exists at least one $\alpha \in [0,1/2]$ with $g(\alpha) = 0$, or equivalently, $f(\alpha) = \alpha$. Let $\alpha$ be the smallest such value. Without loss of generality, we can assume that $pos_{\alg}(1/2 + \alpha) \le 1/2$, i.e., the online server visited all requests between the origin and $pos_{\alg}(1/2 + \alpha)$ on the clockwise tour. Hence, we assume without loss of generality that $pos_{\alg}(1/2+\alpha) = 1/2 - \alpha$. 
	
	The server might have gone a bit further and returned to $1/2-\alpha$, i.e., the server might have covered the segment $[1/2-\alpha,1/2-\alpha+y]$ for some $y \ge 0$. Similarly, the server might have visited parts of the counter-clockwise tour as well and returned to the origin before starting on the clockwise tour, i.e., the server might have covered a segment between $1-z$ and the origin for some $z \ge 0$. In the case of $z = 0$, we identify the point $1$ with the origin.
	Since the server had to cover both theses additional segments twice, their combined length is at most $y + z \le \alpha$, as the server is at a distance $1/2-\alpha$ at time $1/2 + \alpha$.
	
	Overall, the server still needs to visit the segment $(1/2-\alpha + y, 1-z)$ of size at least 2. Additionally, at $t = 1/2 + \alpha$, the remaining $(n-1)/3+1$ requests are revealed. They are distributed equidistantly on the segment $[0,1/2-\alpha]$, i.e., the already visited part. The distance between these requests is at most $\epsilon/4$, where the exact distance depends on $\alpha$.
	
	An optimal offline algorithm can still serve all requests in time $1$ by taking the counter-clockwise tour around the ring. To optimally finish the tour, the online algorithm can now either continue the clockwise tour and then visit the newly released requests (except for the one at $1/2-\alpha$, which can instantly be served). This leads to a completion time of at least $1/2 + \alpha + (1-1/2+\alpha) + 2\cdot (1/2-\alpha-\epsilon/4) = 2 - 1/2\epsilon > 2-\epsilon$. Alternatively, the online algorithm could also continue the clockwise tour up to the last unserved request (before reaching $1-z$) and turn back, going to the origin in a counter-clockwise fashion. This would lead to a completion time of at least $1/2 + \alpha + 2 \cdot (1/2+\alpha - z - \epsilon/4) + 1/2-\alpha = 2 + 2\alpha - 2z -\epsilon / 2 \ge 2 - \epsilon/2 > 2-\epsilon$ as $\alpha \ge z$.
	
	Since the offline optimum is 1, this means that no online algorithm can achieve a better competitive ratio than $2-\epsilon$, for any $\epsilon > 0$, even when the number of requests is known.
\end{proof}

\subsection{Proof of Proposition~\ref{prop:KnownNumberStar}}\label{app:KnownNumberStar}

{\bf Proposition~\ref{prop:KnownNumberStar}.} 
{\it 
When the locations are unknown, for any $\epsilon>0$ there is no $(2-\epsilon)$-competitive algorithm for both closed and open OLTSP on the star, even if the number of requests is known.}

\begin{proof}
Let us consider the closed version. Let $\epsilon>0$, $n=\lceil 7/\epsilon \rceil$. We consider a star with $k=\lfloor n/2 \rfloor$ branches, centered at $O$. Let us call $a_i$ the point on the $i$-st branch at distance 1 from $O$. All the locations will be either at the origin, or at some $a_i$. 

At time $t=1$ we release a first set of $k$ requests, one at each $a_i$. Consider an algorithm $\alg$ serving the requests. We proceed as follows until time $t^*=2k-1$: each time $\alg$ serves a request at some time $t$ at some point $a_i$, we release a new request at time $t+2$ on the same location $a_i$. 

Note that $\alg$ needs at least 2 units of time to serve a new request after having served one (it either has to move, or to wait 2 at some $a_i$). So at time $t^*$ it has served at most $k$ requests. Hence this way at most $k+k \leq n$ requests have been defined. If less than $n$ requests have been defined, we release at $t^*$ the remaining requests at $O$. 

At $t^*$, by construction there is still one request to serve in each branch for $\alg$, so it cannot finish before $t^*+2(k-1)=4k-3$. 

Now, let us show that $|\opt|\leq 2k+2$, which will lead to the result as $\frac{4k-3}{2k+2}=2-\frac{7}{2k+2}>2-\epsilon$.
For each branch $i$, let us consider $t^*_i$ the last time before (or at most) $t^*$ where $\alg$ went in $a_i$ ($t^*_i=0$ if $\alg$ did not visit $a_i$ before $t^*$). Up to relabeling, suppose that $t^*_1\leq t^*_2\leq \dots \leq t^*_k$. Note that if $t^*_i>0$ then $t^*_{i+1}\geq t^*_i+2$. As $t^*_k\leq t^*=2k-1$, we have $t^*_{k-1}\leq t^*-2=2(k-1)-1$ and, by an easy recurrence, $t^*_{k-i}\leq t^*-2i=2(k-i)-1$, or simply $t^*_i\leq 2i-1$ (which is also true for $t^*_i=0$). So the last request in branch $i$  occurs at time at most $2i-1+2=2i+1$. Then we can simply go to $a_i$ at time $2i+1$, for $i$ from 1 to $k$, and then go back to $O$. We serve all the requests in time $2k+2$.   
The same argument actually works for the open version as well.
\end{proof}

\subsection{Proof of Proposition~\ref{prop:KnownNumberClosedSemiline}}\label{app:KnownNumberClosedSemiline}

{\bf Proposition~\ref{prop:KnownNumberClosedSemiline}.} 
{\it 
When locations are unknown, for any $\epsilon > 0$, there is no $(4/3-\epsilon)$-competitive algorithm for closed OLTSP on the semi-line, even if the number of requests is known.}

\begin{proof}
Let $\epsilon > 0$ and a semi-line starting at the origin $O$ with length $1$. Consider an algorithm $\alg$ that serves the only one request and let $s$ be the distance of $\alg$ from $O$.

At time $t=1$:
\begin{itemize}
    \item if $s\geq 1/3$, then place and release a request at the origin $O$. In that case, $|\opt|=1$ and $|\alg| \geq 1+1/3=4/3$. So, $\frac{|\alg|}{|\opt|} \geq 4/3$.
    
    \item else: we place and release a request at position $1$. Then, $|\opt|=2$, $|\alg| \geq 2+2/3=8/3$ and $\frac{|\alg|}{|\opt|} \geq 4/3$.
\end{itemize}
Consequently, there is no online algorithm that can achieve a competitive ratio smaller than $4/3-\epsilon$, for any $\epsilon > 0$, even when the number of requests is known.
\end{proof}

\subsection{Proof of Proposition~\ref{prop:KnownNumberOpenSemiline}}\label{app:KnownNumberOpenSemiline}

{\bf Proposition~\ref{prop:KnownNumberOpenSemiline}.} 
{\it 
When locations are unknown, for any $\epsilon > 0$, there is no $(3/2-\epsilon)$-competitive algorithm for open OLTSP on the semi-line, even if the number of requests is known.}

\begin{proof}
Let $\epsilon > 0$ and a semi-line starting at the origin $O$ with length $1$. Consider an algorithm $\alg$ that serves the only one request and let $s$ be the distance of $\alg$ from $O$.

At time $t=1$:
\begin{itemize} 
    \item if $s\geq 1/2$, then place and release a request at the origin $O$. In that case, $|\opt|=1$ and $|\alg| \geq 1+1/2=3/2$. So, $\frac{|\alg|}{|\opt|} \geq 3/2$.
    
    \item else: we place and release a request at position $1$. Then, $|\opt|=1$, $|\alg| \geq 1+1/2$ and $\frac{|\alg|}{|\opt|} \geq 3/2$.
\end{itemize}
Therefore, $3/2$ is a lower bound for open OLTSP on the semi-line, even when the number of requests is known.
\end{proof}

\end{document}